\documentclass[11pt]{article}

\usepackage{natbib}
\usepackage{graphicx}%
\usepackage{multirow}%
\usepackage{amsmath,amssymb,amsfonts}%
\usepackage{amsthm}%
\usepackage{mathrsfs}%
\usepackage[title]{appendix}%
\usepackage{xcolor}%
\usepackage{textcomp}%
\usepackage{manyfoot}%
\usepackage{booktabs}%

\usepackage{listings}%
\usepackage{a4wide}

\usepackage{bbm}
\usepackage{bm}
\usepackage{subcaption}

\usepackage{authblk}   

\newcommand{\E}{\mathbb E}
\newcommand{\e}{\mathrm e}

\newcommand{\D}{\mathrm{d}}

\theoremstyle{definition}
\newtheorem{theorem}{Theorem}
\newtheorem{lemma}[theorem]{Lemma}

\newtheorem{proposition}[theorem]{Proposition}
\newtheorem{remark}[theorem]{Remark}

\newtheorem{definition}{Definition}

\begin{document}

\title{Forecasting duration in high-frequency financial data using a self-exciting flexible residual point process}

\author{Kyungsub Lee$^{1}$}

\affil{$^{1}$Department of Statistics, Yeungnam University, Gyeongsan, Republic of Korea}

\date{} %

\maketitle

\begin{abstract}
This paper presents a method for forecasting limit order book durations using a self-exciting flexible residual point process.
High-frequency events in modern exchanges exhibit heavy-tailed interarrival times, posing a significant challenge for accurate prediction. 
The proposed approach incorporates the empirical distributional features of interarrival times while preserving the self-exciting and decay structure. 
This work also examines the stochastic stability of the process,
which can be interpreted as a general state-space Markov chain.
Under suitable conditions, the process is irreducible, aperiodic, positive Harris recurrent, and has a stationary distribution. 
An empirical study demonstrates that the model achieves strong predictive performance compared with several alternative approaches when forecasting durations in ultra-high-frequency trading data.
\end{abstract}
\vspace{1em}
\noindent\textbf{Keywords:} duration forecasting,  self-exciting process, flexible residual point process, intensity model, Markov chain, high-frequency financial data

\section{Introduction}~\label{Sec:intro}

This paper has two primary objectives. 
The first is to predict the durations observed in the limit order book (LOB) using a self-exciting and exponentially decaying point process with a flexible residual component. 
An LOB refers to the dataset that records all types of orders, cancellations, and transactions submitted to an exchange. 
In modern stock exchanges, such as the New York Stock Exchange (NYSE) and the National Association of Securities Dealers Automated Quotations (NASDAQ), millions of orders are recorded on a nanosecond scale each day. 
The intervals between these ultra-high-frequency events are called durations, and various types of durations can be defined.

This study focuses on the durations associated with changes in the mid-price, which provide critical insights into market microstructure dynamics by capturing the timing and intensity of order flow.
Variations in trade arrival rates, as demonstrated by \cite{easley2008time}, reflect information flow and price discovery, 
and econometric frameworks such as autoregression and point process models reviewed by \cite{bauwens2009modelling} capture temporal clustering in these durations.

Price-change durations are also closely related to volatility, and several studies have explored this connection.
For example, \cite{hong2023volatility} developed duration-based volatility estimators, and \cite{fei2023forecasting} and \cite{slim2023forecasting} demonstrated improved forecasting performance when incorporating price durations.
Complementary approaches, including the stochastic duration model by \cite{pelletier2024stochastic} and the duration-dependent Markov-switching framework by \cite{turatti2025combining}, argued that price-change durations are important for modeling and forecasting volatility, liquidity, and order flow dynamics.

Predicting the future durations of mid-price changes at a high-frequency level is highly challenging, since the distribution is not mathematically tractable (e.g., as in the exponential distribution). 
Instead, the distribution of these changes exhibits heavy tails, with a combination of extremely short and long intervals.
Such extreme behavior makes accurate prediction particularly difficult.

Despite these difficulties, predicting the duration based on stochastic processes can be structured into two steps. 
The first step is to identify and characterize the dependence structure of interarrival times. 
Such dependence can be described from a clustering perspective.
The occurrence of short interarrivals tends to increase the likelihood of subsequent short interarrivals, and the same holds for long intervals.
The second step is to select an appropriate distribution for stochastic modeling. 
These procedures result in reduced-form models constructed based on the empirical characteristics of the data.

According to \cite{hautsch2011econometrics}, statistical modeling approaches to this problem can be broadly classified into four categories: intensity, hazard, duration, and count data models. 
Although these approaches appear different regarding their mathematical formulations, they share the common objective of modeling the probability of an event occurring at a given time.

A representative example of an intensity model that describes the number of events occurring within a time interval is the Hawkes process~\citep{Hawkes1, Hawkes2}. 
Although many variations of the Hawkes process exist, its fundamental feature is that each event increases the intensity by a fixed amount, after which the intensity gradually decays over time. 
The rate of decay depends on the random interarrival time until the next event. 
The Hawkes process has an appealing mathematical structure,
which can be represented using the immigration–descendant framework
and has been widely applied in finance, particularly for modeling high-frequency market dynamics \citep{embrechts2011multivariate,bacry2015hawkes,lee2023modeling,chen2024modeling,jain2024limit}.
Stationarity can be ensured by requiring that the integral of the kernel to be less than one. 
However, if an alternative kernel specification~\citep{hardiman2013critical, bacry2016estimation, kanazawa2021ubiquitous} replaces the exponential decay kernel,
the process loses its finite-state Markov property, 
leading to substantial difficulties for simulation and estimation.

Duration models such as the autoregressive conditional duration (ACD) model introduced by \cite{Engle1998}, have been widely applied and extended to numerous variants \citep{bauwens2000logarithmic,fernandes2006family,meitz2006evaluating}.
Unlike the Hawkes process, where generally a constant jump size is added to the event frequency, 
duration models determine the increments of the underlying stochastic process randomly at each event time.
The process is specified as a weighted average of past durations and represents the expected value of the next interarrival time. 
Owing to this simple structure, the exponential distribution originally employed in the ACD model as well as various alternative distributions can be incorporated \citep{hautsch2003assessing,bauwens2004comparison,de2006regime}.

According to \cite{lee2025self}, intensity models can also incorporate a wide range of distributions while preserving the fundamental self-exciting and decay terms. 
This approach naturally embeds the empirical distributional features of interarrival times in the intensity framework,  enhancing the accuracy of future predictions. 
When applied to LOB durations, the idea is to maintain the self-exciting and decaying structure of the Hawkes process and Markov property by introducing flexible residuals.
These residuals are defined as independent and identically distributed (i.i.d.) sequences of heavy-tailed random variables transformed via a suitable inverse function.
This work also discusses how duration models and flexible residual models can be viewed as belonging to a single family of models.

The second focus of this paper is the stochastic stability of the flexible residual point process. 
Although \cite{lee2025self} originally proposed this model, only its definition and basic properties were introduced, 
and its fundamental probabilistic characteristics remain largely unexplored. 
While the process is presented as an intensity-based point process, 
it can also be viewed as a Markov chain on a general state-space.

Under certain conditions, an appropriate Foster–Lyapunov function can be constructed for a general state-space Markov chain, ensuring that the chain is positive Harris recurrent.
Moreover, the chain is irreducible and aperiodic, and it possesses a stationary distribution, making it suitable for rigorous probabilistic analysis.

The remainder of this paper is organized as follows. 
Section~\ref{Sec:model} introduces the model, discusses its basic properties, and compares it with alternative approaches. 
Section~\ref{Sect:SEFRPP} examines the stochastic stability from the perspective of Markov chains. 
Section~\ref{Sect:empirical} presents an empirical study based on high-frequency LOB data and compares the duration forecasting performance of several models. 
Finally, Section~\ref{Sect:concl} concludes the paper.

\section{Modeling framework}\label{Sec:model}

\subsection{General setup}

This section outlines the mathematical framework for modeling event arrivals as a point process. 
While a point process can be formally defined using a measure-theoretic approach, we focus here on the sequence of event times and the underlying dynamics that drive them. 
For a rigorous treatment of the measure-theoretic foundations of point processes,
we refer the interested reader to \cite{daley2007introduction} and \cite{bremaud2020point}.

In the context of this study, we consider a point process defined on the positive real line $\mathbb R_+$, representing a sequence of arrival times. 
The process can be characterized by its arrival times, $ 0 < T_1 < T_2 < \cdots$, and interarrival times $\tau_i = T_i - T_{i-1}$, where the distribution of each  depends on the history of the process.

To model these dynamics, we adopt a framework where the interarrival times are governed by an underlying state process that evolves over time and depends only on past information.
This approach allows us to capture the self-exciting or autoregressive properties often observed in high-frequency data. 
The following definition introduces a flexible class of models for interarrival times that exhibit the Markov property.

\begin{definition}\label{Def:general}
	Let $\{ \varepsilon_n \}_{n \in \mathbb{N}}$ be a sequence of i.i.d. random variables with an absolutely continuous distribution on $(0,\infty)$. 
	Suppose there exists an underlying state process $\{ X_n \}$ adapted to the natural filtration
	$
	\mathcal{F}_{n-1} = \sigma\bigl(X_0, \varepsilon_1, \ldots, \varepsilon_{n-1}\bigr).
	$
	The function \( \Phi(t, x) \) is assumed to be strictly increasing in \( t \) for each fixed \( x \) and defined for \( t \geq 0 \).
	The interarrival time $\tau_n$ and $\varepsilon_n$ have a relationship such that
	\begin{equation}
	\varepsilon_n = \Phi(\tau_n, X_{n-1}) \quad \text{i.e.,} \quad \tau_n = \Phi^{-1}(\varepsilon_n, X_{n-1})  \label{Eq:tau}
	\end{equation}
	where the inverse is taken with respect to \( t \), holding \( x \) fixed.
	The dynamics evolve according to an update function $\Psi(t, x)$ such that
	\begin{equation}
		X_n = \Psi(\tau_n, X_{n-1}). \label{Eq:X}
	\end{equation}
	Then, a locally finite point process $N$ on $\mathbb R_{+}$ is defined by
	$$ N = \sum_{n \in \mathbb N} \delta_{T_n}, \quad \text{where} \quad T_n = \sum_{i=1}^n \tau_i.$$
\end{definition}

Since the process $\{X_n\}$ is adapted to the natural filtration,
the resulting point process is evolutionary (non-anticipative)
in the standard sense.
The conditional expectations of durations and
intensities are taken with respect to past information only, as is
customary in the point process and duration-modeling literature.
In particular, the initial state $X_0$ is $\mathcal{F}_0$-measurable
and does not depend on future innovations.

It is important to note that the framework defined above belongs to the class of
observation-driven models, as categorized by \cite{Cox1981}.
In this specification, the latent process $\{X_n\}$ evolves as a deterministic
function of its past values and past realized durations, with stochastic innovations
entering the system solely through the sequence $\{\varepsilon_n\}$ that directly
drives the interarrival times.

This structure encompasses a wide range of popular duration models used in forecasting applications.
At the same time, it deliberately rules out additional, independent sources of
stochasticity in the latent state dynamics.
In this respect, the proposed framework differs from parameter-driven models, in which
the latent process is itself subject to separate stochastic innovations.

Examples of such parameter-driven specifications include the stochastic conditional
intensity model \citep{Bauwens2006}, the stochastic multiplicative error model
\citep{Hautsch2008}, and various Markov-switching duration and intensity models
\citep{Chen2013, Li2021}.

We also note that the baseline framework in Definition~\ref{Def:general} is formulated in event time and does not explicitly incorporate calendar-time effects.
In practice, however, diurnal and seasonal effects in high-frequency financial durations are well documented \citep{hautsch2011econometrics}.

Calendar-time effects can be incorporated into the proposed framework in several ways.
First, durations may be multiplicatively decomposed by pre-filtering the raw durations $\tau_n$ with a deterministic seasonal component $s(T_n)$, for example estimated using cubic splines, yielding de-seasonalized durations $\tilde{\tau}_n = \tau_n / s(T_n)$.
The proposed framework can then be applied to the adjusted durations, as is standard in the literature.

Second, calendar-time dependence may be introduced directly by allowing the transformation and state-update functions to depend on event time, leading to specifications of the form
$\Phi(\tau_n, X_{n-1}, T_{n-1})$ and $\Psi(\tau_n, X_{n-1}, T_{n-1})$.
Alternatively, selected parameters, such as the baseline component, may be modeled as smooth functions of clock time.

Finally, as in our empirical analysis, intraday time variation is addressed using a rolling-window estimation strategy.
By re-estimating model parameters over fixed intervals (e.g., every 5{,}000 observations), the baseline intensity and dynamic responses adapt locally to evolving market conditions throughout the trading day.
This provides a flexible, data-driven approach without imposing restrictive parametric assumptions on seasonality.

\subsection{Specific models}

The exact behavior of the process in Definition~\ref{Def:general} depends on the specific choice of $\Phi$ and $\Psi$;
hence, a more detailed analysis is deferred to later sections, focusing on special cases of primary interest.
Let $F_\varepsilon$, $f_\varepsilon$ and  $h_\varepsilon$ be the cumulative distribution function (CDF), probability density function (PDF) and hazard function, respectively, of $\varepsilon$.
The conditional intensity for $t = t' - T_{n-1}$ with $t' \in (T_{n-1}, T_{n}]$ is represented by
\begin{align}
	\lambda_{n-1}(t) &= \lim_{\Delta t \rightarrow 0} \frac{\mathbb E[ N(t' + \Delta t) - N(t')|  \mathcal F_{n-1}]}{\Delta t} \nonumber \\
	&= \lim_{\Delta t \rightarrow 0} \frac{\mathbb P[ t < \tau_n < t + \Delta t|   \mathcal F_{n-1}]}{\Delta t} \frac{1}{\mathbb P(\tau_n > t |  \mathcal F_{n-1}) } \nonumber \\
	&=  \lim_{\Delta t \rightarrow 0} \frac{F_{\varepsilon} (\Phi(t + \Delta t, X_{n-1})) - F_{\varepsilon} (\Phi(t, X_{n-1}))}{\Delta t} \frac{1}{\mathbb P(\tau_n > t |  \mathcal F_{n-1})}  \nonumber \\
	&= \frac{\partial F_{\varepsilon}(\Phi(t, X_{n-1}))}{\partial t} \frac{1}{\mathbb P(\tau_n > t |  \mathcal F_{n-1})} 
	= \frac{ f_{\varepsilon}(\Phi(t, X_{n-1})) }{1 - F_{\varepsilon}(\Phi(t, X_{n-1}))} \frac{\partial \Phi(t, X_{n-1})}{\partial t}  \nonumber\\
	&= h_\varepsilon(\Phi(t, X_{n-1})) \frac{\partial \Phi(t, X_{n-1})}{\partial t}. \label{Eq:intensity}
\end{align}
We write $\lambda_{n-1}(t)$ as a function of $t = t' - T_{n-1}$, representing the elapsed time since the $(n-1)$th event,
which is a slight abuse of the notation adopted for convenience.
The CDF and PDF of $\tau = \tau_n$ with $x = X_{n-1}$ are respectively represented by
\begin{equation}
F_\tau(t; x) = F_\varepsilon(\Phi(t, x)), \quad f_\tau(t;x) = f_\varepsilon(\Phi(t,x)) \frac{\partial \Phi(t,x)}{\partial t}. \label{Eq:F_tau}
\end{equation}

In the point process literature, this quantity is known as the backward recurrence time, formally defined as $x(t') = t' - T_{\tilde{N}(t')}$, where $\tilde{N}(t') := \sum_{n \geq 0} \mathbf{1}\{T_n < t'\}$ is the left-continuous version of the counting process \citep{hautsch2011econometrics}.

Throughout this paper, all models are expressed as functions of the elapsed time $t = t' - T_{n-1}$ through the integrated intensity $\Phi(t, X_{n-1})$, given the state $X_{n-1}$ updated
at each event time.
This includes the Hawkes process, which is conventionally formulated as a function of calendar time $t'$ by accumulating contributions from all past events.
In the special case of an exponential kernel, however, the Hawkes process admits a Markovian representation in which the full history is summarized by the intensity at the most recent event time,
allowing it to be expressed within our elapsed-time framework.

\begin{definition}[Renewal process]
	If $\Phi(t, x) = t $, then $\tau_n = \varepsilon_n$ and the resulting process is called a renewal process \citep{feller1950introduction}.
\end{definition}

\begin{definition}[Autoregressive conditional duration (ACD) model]\label{Def:ACD}
	Let $\{ X_n \}$ evolve according to the linear update rule, corresponding to Eq.~\eqref{Eq:X},
	\[
	X_n = \Psi(\tau_n, X_{n-1}) = b_0 + a \tau_n + b_1 X_{n-1}, 
	\]
	where $a, b_0, b_1 > 0$ and $a + b_1 < 1$ to ensure stationarity.
	If
	\begin{equation}
	\Phi(t, x) = \frac{t}{x}, \quad\quad\text{i.e.,}\quad\quad \Phi^{-1}(y, x) = xy\label{Eq:phi_ACD}
	\end{equation}
	then the interarrival time is given by
	\[
	\tau_n = \Phi^{-1}( \varepsilon_n, X_{n-1}) = X_{n-1} \varepsilon_n.
	\]
	Then the resulting process is known as the ACD(1,1) model introduced in~\cite{Engle1998}.
	The ACD(1,1) model is a special case  where the latent process ${X_n}$ is the conditional expected duration.
\end{definition}

\begin{definition}[Log-ACD(1,1) model]\label{Def:logACD}
	Let $\{ X_n \}$ evolve according to
	\[
	X_n = \Psi(\tau_n, X_{n-1}) = \exp \left( b_0 + a \log(\tau_n) + b_1 \log(X_{n-1}) \right)
	\]
	where $a,b_1 \ge 0$, $b_0 \in \mathbb{R}$, and $a+b_1<1$.
	with $\Phi(t, x) = t/x$ as in Definition~\ref{Def:ACD}.
	Then the resulting process is known as the log-ACD(1,1) model introduced by \cite{bauwens2000logarithmic}.
\end{definition}

In the original ACD model, durations are typically written as $\tau_n = X_n \varepsilon_n$.
Here, we adopt the convention $\tau_n = X_{n-1}\varepsilon_n$ to maintain consistency across models.
This change is purely notational and emphasizes that the conditional expected duration $X_{n-1}$ is determined at time $T_{n-1}$.

With unit exponential innovations, the ACD model implies a constant conditional intensity on $(T_{n-1},T_n)$,
\[
\lambda_{n-1}(t)=\frac{1}{X_{n-1}}.
\]
When $\varepsilon_n$ is non-exponential, the intensity varies with elapsed time according to
\begin{equation}
	\lambda_{n-1}(t)
	= h_\varepsilon\!\left(\frac{t}{X_{n-1}}\right)\frac{1}{X_{n-1}}.
	\label{Eq:intensity_ACD}
\end{equation}
(For a detailed discussion of the relationship between ACD models and autoregressive conditional intensity models, 
see \cite{hautsch2011econometrics}.)

\begin{definition}[Log-ACI(1,1) model]\label{Def:logACI}
	Let $\{\varepsilon_n\}$ be an i.i.d.\ sequence of standard exponential 
	random variables.
	Let the latent process $\{\Lambda_n\}$ evolve according to
	\[
	\Lambda_n 
	= \exp \left( b_0
	+ a\bigl(\Lambda_{n-1}\tau_n - 1\bigr)
	+ b_1 \log \Lambda_{n-1} \right),
	\]
	where $a, b_0 \in \mathbb{R}$ and $|b_1| < 1$.
	Assume that the transformation function $\Phi(t,x)$ is given by
	\begin{equation}
		\Phi(t,x) = x t,
		\quad\quad\text{i.e.,}\quad\quad
		\Phi^{-1}(y,x) = \frac{y}{x}.
		\label{Eq:phi_ACI}
	\end{equation}
	Then the interarrival time satisfies
	\begin{equation}
		\tau_n = \Phi^{-1}(\varepsilon_n, \Lambda_{n-1})
		= \frac{\varepsilon_n}{\Lambda_{n-1}}. \label{Eq:tau_ACI}
	\end{equation}
	The resulting point process is known as the log-ACI(1,1) model introduced 
	in~\cite{russell1999econometric}, in which $\{\Lambda_n\}$ represents 
	the conditional intensity process.
\end{definition}

\begin{definition}[Self-exciting exponentially decaying flexible residual point process]\label{Def:FRPP}
	The latent process $\{ \Lambda_n \}$ evolves according to
	\begin{equation}
		\Psi(t, x) = \mu + (x - \mu + \alpha)\e^{-\beta t}, \qquad \mu, \alpha, \beta > 0.
		\label{Eq:psi_FRPP}
	\end{equation}
	Thus,
	\[
	\Lambda_n = \Psi(\tau_n, \Lambda_{n-1})
	= \mu + (\Lambda_{n-1} - \mu + \alpha)\e^{-\beta \tau_n}.
	\]
	We define $\Phi(t, x)$ as follows:
	\begin{equation}
		\Phi(t, x)
		= \int_0^t \Psi(s, x)\,\mathrm{d}s
		= \mu t + (x - \mu + \alpha)\frac{1 - \e^{-\beta t}}{\beta}.
		\label{Eq:phi_FRPP}
	\end{equation}
	The interarrival times are generated by
	\[
	\tau_n = \Phi^{-1}(\varepsilon_n, \Lambda_{n-1}),
	\]
	where the inverse is taken with respect to $t$.
	The resulting process is a self-exciting, exponentially decaying point process with flexible residuals introduced in~\cite{lee2025self}.
\end{definition}

\begin{remark}\label{Remark:stability}
	To ensure that the latent process $\{\Lambda_n\}$ is well defined
	and admits a stationary regime, we impose the condition
	\[
	\alpha < \beta \mu_\varepsilon,
	\qquad
	\mu_\varepsilon := \E[\varepsilon] < \infty.
	\]
	See Section~\ref{Sect:SEFRPP} for further discussion.
\end{remark}

Under Definition~\ref{Def:FRPP}, the conditional intensity between events is
\begin{equation}
	\lambda_{n-1}(t)
	= h_{\varepsilon}(\Phi(t, \Lambda_{n-1})) \, \Psi(t, \Lambda_{n-1}),
	\label{Eq:intensity_FRPP}
\end{equation}
where $\Psi$ and $\Phi$ are given in
Eqs.~\eqref{Eq:psi_FRPP}--\eqref{Eq:phi_FRPP}.

\begin{definition}[Exponential Hawkes process]
	If $\varepsilon_n$ follows the unit exponential distribution,
	the process in Definition~\ref{Def:FRPP} reduces to the
	exponential Hawkes process \citep{Hawkes1,Hawkes2},
	since $h_{\varepsilon}(\cdot) = 1$ in
	Eq.~\eqref{Eq:intensity_FRPP}.
\end{definition}

Definitions~\ref{Def:ACD} and~\ref{Def:FRPP} highlight the differences between ACD-type models and self-exciting models.

In self-exciting models, the occurrence of an event induces a deterministic upward
shift in the latent process, which directly affects the conditional intensity.
If the subsequent arrival occurs after a long delay, the latent process decays
gradually toward its baseline level.
In contrast, when arrivals occur in rapid succession, the decay between events is
limited, allowing the latent process—and hence the intensity—to build up over time.
Repeated clusters of short interarrival times therefore lead to substantial
self-excitation.
In ACD-type models, by contrast, the latent process evolves only at event times,
and its update is stochastic.
A long observed interarrival time leads to an increase in the latent duration,
which in turn implies a lower conditional intensity due to their inverse relationship.
Conversely, a short interarrival time reduces the latent duration and results in
a higher conditional intensity.

In summary, self-exciting models feature deterministic jumps in the latent process
at event times, with randomness entering through the length-dependent decay between
arrivals.
ACD-type models, on the other hand, maintain a constant latent process between
events, while randomness enters through the event-driven updates determined by
the preceding interarrival time.

More generally, Definition~\ref{Def:general} provides a unified framework for
constructing a broad class of point processes through suitable choices of the
functions $\Psi$ and $\Phi$.
This flexibility enables the modeler to tailor the dynamics to different data
features or application settings.
Among the many possible specifications, this work focuses on the recently proposed
process in Definition~\ref{Def:FRPP}, which is examined in detail in
Sections~\ref{Sect:SEFRPP} and~\ref{Sect:empirical}.

\subsection{Residual process}

Integrating the conditional intensity $\lambda_{n-1}(t)$ over the interarrival
time $\tau_n$ yields an exponential residual,\footnote{This residual arises from
the random time change theorem \citep{Meyer1971,BrownNair1988} and is distinct from
the model innovation $\varepsilon_n$.}
which coincides with the transformed time in intensity-based models.
Specifically,
\begin{align}
	\int_0^{\tau_n} \lambda_{n-1}(t) \, \mathrm{d}t
	&= \int_0^{\tau_n} 
	\frac{f_{\varepsilon}(\Phi(t, X_{n-1}))}{1 - F_{\varepsilon}(\Phi(t, X_{n-1}))}  
	\frac{\partial \Phi}{\partial t}(t, X_{n-1})  \, \D t \\
	&= \int_0^{\varepsilon_n} \frac{f_{\varepsilon}(u)}{1 - F_{\varepsilon}(u)} \, \mathrm{d}u \\
	&= - \log(1 - F_{\varepsilon}(\varepsilon_n)) = - \log(1 - F_{\varepsilon}(\Phi(\tau_n, X_{n-1}))). 	\label{Eq:eres}
\end{align}
Since $F_\varepsilon(\varepsilon_n)$ is uniformly distributed on $(0,1)$,
the residual in~\eqref{Eq:eres} follows a standard exponential distribution.
Equivalently,
\[
1-\exp\!\left(-\int_0^{\tau_n}\lambda_{n-1}(t)\,\mathrm dt\right)
= F_\varepsilon(\varepsilon_n).
\]

The conditional expectation of the interarrival time given $\mathcal F_{n-1}$ is
\begin{equation}
	\E[\tau_n \mid \mathcal{F}_{n-1}] = \int_0^\infty s \, \lambda_{n-1}(s) \, S_{n-1}(s) \D s, \label{Eq:slambdaS}
\end{equation}
where \( S_{n-1}(s) \) denotes the conditional survival function,
\begin{equation}
	S_{n-1}(s) = \exp\left(-\int_0^s \lambda_{n-1}(u) \D u\right) = 1 - F_{\varepsilon}(\Phi(s, X_{n-1})).
\end{equation}
Using \( \lambda_{n-1}(s) S_{n-1}(s)
= f_\varepsilon(\Phi(s,X_{n-1})) \partial \Phi/\partial s \),
we obtain
\begin{equation}
	\E[\tau_n \mid \mathcal{F}_{n-1}] =  \int_0^{\infty} s f_{\varepsilon}(\Phi(s, X_{n-1}))
	\frac{\partial \Phi}{\partial s} (s, X_{n-1}) \D s.
\end{equation}
When the innovation distribution has a finite mean
(and hence \(\E[\tau_n \mid \mathcal F_{n-1}] < \infty\)),
from Eq.~\eqref{Eq:slambdaS} we also have
\begin{align}
	\E[\tau_n \mid \mathcal{F}_{n-1}] & = \int_0^\infty s \lambda_{n-1}(s) S_{n-1}(s) \D s \\
	&= \left[-s S_{n-1}(s)\right]_0^\infty +  \int_0^\infty S_{n-1}(s) \D s = \int_0^\infty S_{n-1}(s) \D s \\
	&= \int_0^\infty
	\left[1 - F_{\varepsilon} \left(\Phi(s, X_{n-1})\right)\right] \D s. \label{Eq:cond_tau_SE2}
\end{align}

In the ACD model, where $\Phi(s, X_{n-1}) = s / X_{n-1}$,
\begin{equation}
	\E[\tau_n \mid \mathcal{F}_{n-1}] =  \int_0^{\infty} s f_{\varepsilon}\left(\frac{s}{X_{n-1}}\right) 
	\frac{\mathrm{d}}{\mathrm{d}s} \left( \frac{s}{X_{n-1}} \right) \D s = X_{n-1} \mu_{\varepsilon}. \label{Eq:cond_tau_ACD}
\end{equation}
For the self-exciting and exponentially decaying flexible residual point process, we have
\begin{equation}
	\E[\tau_n \mid \mathcal{F}_{n-1}] = \int_0^\infty s f_{\varepsilon}(\Phi(s, \Lambda_{n-1})) \Psi(s, \Lambda_{n-1}) \D s, \label{Eq:cond_tau_SE} 
\end{equation}
where \( \Psi(s, \Lambda_{n-1}) = \lambda_{n-1}(s) \) and \( f_{\varepsilon} \) denotes the density of the innovation variable.

If $\varepsilon$ follows the standard exponential distribution (i.e., $f_{\varepsilon}(x) = \e^{-x}$), this equation simplifies to
\begin{align*}
	\E[\tau_n \mid \mathcal{F}_{n-1}] &= \int_0^\infty s \e^{- \Phi(s, \Lambda_{n-1})}  \Psi(s, \Lambda_{n-1}) \D s \\
	&= \int_0^\infty  \e^{- \Phi(s, \Lambda_{n-1})}   \D s.
\end{align*}

The conditional expectation $\E[\tau_n \mid \mathcal F_{n-1}]$ plays a central role
in our empirical analysis to forecast the next interarrival time.
Except for special cases such as the ACD model, the integral expressions for
$\E[\tau_n \mid \mathcal F_{n-1}]$ do not admit closed-form solutions.
In the empirical implementation, these expectations are therefore evaluated
numerically using one-dimensional numerical integration, conditional on the
estimated latent state.

\subsection{Log-likelihood function}

The log-likelihood function based on the observed interarrival times $\{ \tau_n \}_{n=1}^{N}$ is given by
\[
\log \mathcal{L}_N(\bm\theta) = \sum_{n=1}^{N} \ell_n(\bm\theta),
\]
where $\ell_n(\bm\theta)$ denotes the contribution to the log-likelihood from the $n$th observation, and is expressed as
\begin{equation}
	\ell_n(\bm\theta) = \log f_{\varepsilon}(\Phi(\tau_n, X_{n-1})) + \log \left( \frac{\partial \Phi}{\partial t}(\tau_n, X_{n-1}) \right). \label{Eq:ln1-general}
\end{equation}
In the self-exciting point process with flexible residuals, where $\Psi(t, x) = \frac{\partial \Phi}{\partial t}(t, x)$, the expression simplifies to
\begin{equation}
	\ell_n(\bm\theta) = \log f_{\varepsilon}(\Phi(\tau_n, X_{n-1})) + \log \Psi(\tau_n, X_{n-1}). \label{Eq:ln1-flexible}
\end{equation}
Then the maximum likelihood estimator is defined by
\begin{equation}
	\hat{\bm\theta} = \arg\max_{\bm\theta} \log \mathcal{L}_N(\bm\theta)
	= \arg\max_{\bm\theta} \sum_{n=1}^{N} \left\{ \log f_{\varepsilon}(\Phi(\tau_n, X_{n-1})) + \log \left( \frac{\partial \Phi}{\partial t}(\tau_n, X_{n-1}) \right) \right\}. \label{Eq:MLE}
\end{equation}
This likelihood formulation follows the standard construction for point processes
based on conditional intensities, see \citet{Karr1991}.

\section{Self-exciting point process with flexible residual}\label{Sect:SEFRPP}

\subsection{Linear representation}
This section focuses on the properties of the self-exciting and exponentially decaying flexible residual point process
defined in~Definition~\ref{Def:FRPP}.

By construction, the process $\Lambda_n$ evolves according to the exponential decay, moderating the self-exciting effect. 
Specifically,
\begin{equation}
	\Lambda_n = \mu + (\Lambda_{n-1} - \mu + \alpha)\e^{-\beta \tau_n}. \label{Eq:lambda}
\end{equation}
Using the transformation formula in Eq.~\eqref{Eq:phi_FRPP} yields
\[
\varepsilon_n = \Phi(\tau_n, \Lambda_{n-1}) = \mu \tau_n + (\Lambda_{n-1} - \mu + \alpha)\frac{1 - \e^{-\beta \tau_n}}{\beta},
\]
from which it follows that
\[
(\Lambda_{n-1} - \mu + \alpha)\e^{-\beta \tau_n} = \Lambda_{n-1} - \mu + \alpha + \beta \mu \tau_n - \beta \varepsilon_n.
\]
Substituting this equation into Eq.~\eqref{Eq:lambda} yields the following linear representation:
\begin{equation}
	\Lambda_n = \Lambda_{n-1} + \alpha - \beta \varepsilon_n + \beta \mu \tau_n. \label{Eq:lambda_linear}
\end{equation}
Recursively applying Eq.~\eqref{Eq:lambda_linear} leads to the series expansion
\begin{equation}
	\Lambda_n = \Lambda_{0} + n\alpha - \beta \sum_{i=1}^n\varepsilon_i + \beta \mu \sum_{i=1}^{n}\tau_i. \label{Eq:lambda_n_series}
\end{equation}
Taking expectations in Eq.~\eqref{Eq:lambda_n_series} yields
\[
\mathbb{E}[\Lambda_n]
= \Lambda_0 + n(\alpha - \beta \mu_\varepsilon)
+ \beta \mu \sum_{i=1}^n \mathbb{E}[\tau_i].
\]
Since $\mathbb{E}[\tau_i] > 0$, a necessary condition for $\mathbb{E}[\Lambda_n]$ to remain bounded is
\begin{equation}
	\alpha \le \beta \mu_\varepsilon,
	\label{Eq:stability}
\end{equation}
which motivates the stability condition imposed in Remark~\ref{Remark:stability}.

\subsection{Markov Chain}

This section examines the process $\{\Lambda_n\}$ as a general state-space Markov chain, 
which can also be considered a scalar nonlinear state-space model.
Definitions and standard background follow \cite{meyn2009markov}.

The state space of the Markov chain is given by $\mathcal{X} = [\mu, \infty)$. 
Even if the chain starts at $\mu$, it cannot return to this value, since the update rule includes a strictly positive excitation term $\alpha$ and an exponential decay factor smaller than one.

The transition dynamics are characterized by a transition kernel \( P(x, B) \) for any $x \in \mathcal{X}$ and any Borel set \( B \subseteq \mathcal{X} \), defined as
\begin{equation}
P(x, B) = \mathbb{P}\left( \mu + (x - \mu + \alpha) \e^{-\beta \tau} \in B \right), \label{Eq:P}
\end{equation}
where \( \tau = \Phi^{-1}(\varepsilon, x) \), and \( \varepsilon \sim F_\varepsilon \) represents a positive random variable with support on $(0, \infty)$ and density function \( f_\varepsilon > 0 \).

Under the stability condition, the fundamental asymptotic and statistical properties of the intensity process are established in the following theorem.

\begin{theorem}[Ergodicity and Stationarity]\label{Thm:Main}
	Suppose the stability condition $\alpha < \beta \mu_\varepsilon$ holds. Then, the process $\{ \Lambda_n \}$ is a Lebesgue-irreducible, aperiodic, and positive Harris recurrent Markov chain on the state space $\mathcal{X} = [\mu, \infty)$. Specifically:
	\begin{enumerate}
		\item There exists a unique stationary distribution $\pi$. For any initial distribution, the law of $\Lambda_n$ converges to $\pi$ in total variation as $n \to \infty$.
		\item The process is ergodic, ensuring that sample averages of the durations $\{ \tau_n \}$ converge to their theoretical counterparts.
		\item  Under stationarity, the expected inter-arrival time is uniquely determined by the model parameters:
		\begin{equation}
			\mathbb{E}[\tau_n] = \frac{\beta \mu_\varepsilon - \alpha}{\beta \mu}. \label{Eq:Exp_Tau}
		\end{equation}
	\end{enumerate}
\end{theorem}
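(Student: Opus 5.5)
The plan is to verify the hypotheses of the standard Meyn--Tweedie theory: a $\psi$-irreducible, aperiodic chain satisfying a Foster--Lyapunov drift condition toward a petite set is positive Harris recurrent (\cite{meyn2009markov}, Thms.~11.3.4, 13.0.1). The key tool is the one-step transition written as a smooth monotone map of $\varepsilon$. For fixed $x$, $\varepsilon\mapsto\tau=\Phi^{-1}(\varepsilon,x)$ is a $C^1$ increasing bijection of $(0,\infty)$, since $\partial_t\Phi=\Psi>0$. The map $\tau\mapsto\mu+(x-\mu+\alpha)\e^{-\beta\tau}$ is a $C^1$ decreasing bijection onto $(\mu,x+\alpha)$. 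Hence $P(x,\cdot)$ has a density $p(x,y)$ on $(\mu,x+\alpha)$, given by change of variables from $f_\varepsilon>0$. This density is strictly positive and jointly continuous in $(x,y)$ when $f_\varepsilon$ is continuous; otherwise I would work with lower semicontinuous minorants.

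\textbf{Irreducibility, small sets, aperiodicity.} Fix $M>\mu$ and $C=[\mu,M]$. For every $x\in C$ the support $(\mu,x+\alpha)$ contains $I=(\mu,\mu+\alpha)$, and on a compact subinterval $J\subset I$ the density $p(x,y)$ is bounded below uniformly in $x\in C$ by a compactness/continuity argument. This gives the minorization $P(x,\cdot)\ge\delta\,\mathrm{Leb}(\cdot\cap J)$ for $x\in C$, so every compact $C$ is small. Because the minorization holds in one step, the chain is strongly aperiodic. For Lebesgue irreducibility, note that from any $x$ the chain enters $J$ with positive probability in one step. Each further step can raise the state by nearly $\alpha$ (take $\varepsilon$ small), so any set of positive Lebesgue measure in $[\mu,\infty)$ is reached with positive probability after finitely many steps.

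\textbf{Drift.} Take $V(x)=x-\mu+1$. By the linear representation \eqref{Eq:lambda_linear},
\[
PV(x)-V(x)=\alpha-\beta\mu_\varepsilon+\beta\mu\,\E[\tau\mid\Lambda_0=x].
\]
Since $\Phi(t,x)\ge\mu t$, we have $\tau\le\varepsilon/\mu$, which is integrable. Since $\Phi(t,x)\ge(x-\mu+\alpha)(1-\e^{-\beta t})/\beta$, we have $\tau\to0$ pointwise as $x\to\infty$. Dominated convergence then gives $\E[\tau\mid x]\to0$. Choose $M$ so that $\beta\mu\E[\tau\mid x]\le(\beta\mu_\varepsilon-\alpha)/2$ for $x>M$. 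Then $PV\le V-\eta+b\mathbf 1_C$ with $\eta=(\beta\mu_\varepsilon-\alpha)/2>0$, which is exactly where the stability condition enters. This yields positive Harris recurrence, a unique invariant $\pi$, and, with aperiodicity, total-variation convergence from every initial state (Thm.~13.0.1). I expect this step, specifically making the uniform lower bounds for the small set rigorous under minimal regularity of $f_\varepsilon$, to be the main technical obstacle.

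\textbf{Ergodicity and the mean.} The pair $Z_n=(\Lambda_{n-1},\varepsilon_n)$ is again a positive Harris chain with invariant law $\pi\otimes F_\varepsilon$, and $\tau_n=\Phi^{-1}(\varepsilon_n,\Lambda_{n-1})$ is a measurable function of $Z_n$ with $\tau_n\le\varepsilon_n/\mu$. The law of large numbers for Harris chains (Thm.~17.0.1) therefore gives $n^{-1}\sum\tau_i\to\E_\pi[\tau]<\infty$ a.s., together with $n^{-1}\sum\varepsilon_i\to\mu_\varepsilon$. Dividing \eqref{Eq:lambda_n_series} by $n$ gives
\[
\frac{\Lambda_n-\Lambda_0}{n}=\alpha-\beta\frac{1}{n}\sum\varepsilon_i+\beta\mu\frac{1}{n}\sum\tau_i .
\]
The left side tends to $0$ in probability because $\Lambda_n$ converges in law to $\pi$ and is therefore tight. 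The right side converges a.s., so its limit must be $0$, i.e.\ $\alpha-\beta\mu_\varepsilon+\beta\mu\E_\pi[\tau]=0$. This is \eqref{Eq:Exp_Tau}, and the argument never requires $\pi$ to have a finite first moment.
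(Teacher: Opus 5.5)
Your proof is correct. It follows the same Meyn--Tweedie outline as the paper (Lebesgue irreducibility, compact sets petite, Foster--Lyapunov drift with a linear $V$, then aperiodicity for total-variation convergence), but three of the key steps are done differently.

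\textbf{Petiteness and aperiodicity.} The paper first proves a T-chain property: lower semicontinuity of $x\mapsto\int_{B\cap(\mu,\mu+\alpha]}f_\Lambda(y\mid x)\,\mathrm{d}y$ via Fatou. It then invokes the general theorem that compact sets of a $\psi$-irreducible T-chain are petite. Aperiodicity is argued separately, from the support $(\mu,x+\alpha)$ containing a neighbourhood of $x$. Your explicit one-step minorization on $[\mu,M]\times J$ gives smallness and strong aperiodicity at once; choose $J\subset(\mu,\min(M,\mu+\alpha))$ so that $\nu(C)>0$. The paper also assumes $f_\varepsilon$ continuous, so the regularity issue you flag is no worse than in the paper.

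\textbf{Drift.} The paper shows $\E[\tau\mid x]\to0$ through the piecewise linear convex upper bound $U(y,x)$ for $\Phi^{-1}$, with the threshold $y_0(x)$ of Appendix A. Your bounds $\tau\le\varepsilon/\mu$ and $1-\e^{-\beta\tau}\le\beta\varepsilon/(x-\mu+\alpha)$, combined with dominated convergence, reach the same conclusion much more directly.

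\textbf{The mean formula.} The paper takes expectations in the linear recursion under stationarity. This needs $\E_\pi[\Lambda]<\infty$, which the paper asserts but never proves. The claim is in fact true: with $V(x)=x^2$ the drift is $\le-\eta x$ off a compact set, because $\Lambda_n-\Lambda_{n-1}\le\alpha$ and $\E[\min(\varepsilon,x)^2]=o(x)$. Your route avoids the moment question altogether: you apply the SLLN for the augmented chain $(\Lambda_{n-1},\varepsilon_n)$ and use tightness of $\Lambda_n$. This also gives an actual proof of item 2, which the paper only asserts by citation.

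The one ingredient you state without proof is that the augmented chain is positive Harris with invariant law $\pi\otimes F_\varepsilon$. This is standard. Invariance holds because $\varepsilon_{n+1}$ is independent of $\Lambda_n$. Harris recurrence transfers from $\{\Lambda_n\}$ by a conditional Borel--Cantelli argument on the fresh innovation at each visit.
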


\begin{proof}
	See Appendix~\ref{Append:math}.
\end{proof}

The properties established in Theorem~\ref{Thm:Main} provide the theoretical foundation
for forecasting with the proposed model.
In particular, ergodicity guarantees the consistency of likelihood-based estimation,
while the stability condition $\alpha < \beta \mu_\varepsilon$ prevents explosive
intensity dynamics and ensures well-defined long-run duration forecasts.

\subsection{Autocorrelation structure}

\begin{figure}[t]
	\centering
	\begin{subfigure}[t]{0.45\textwidth}
		\includegraphics[width=\textwidth]{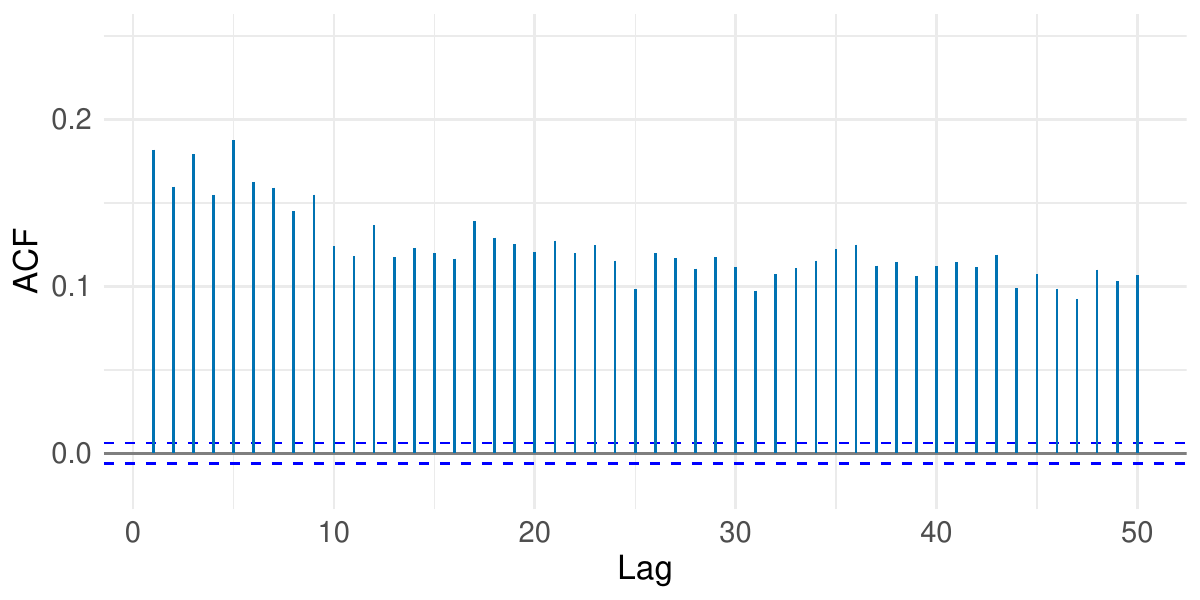}
		\caption{$\alpha = 0.095$ and $\beta = 0.1$}
	\end{subfigure}
	\quad
	\begin{subfigure}[t]{0.45\textwidth}
		\includegraphics[width=\textwidth]{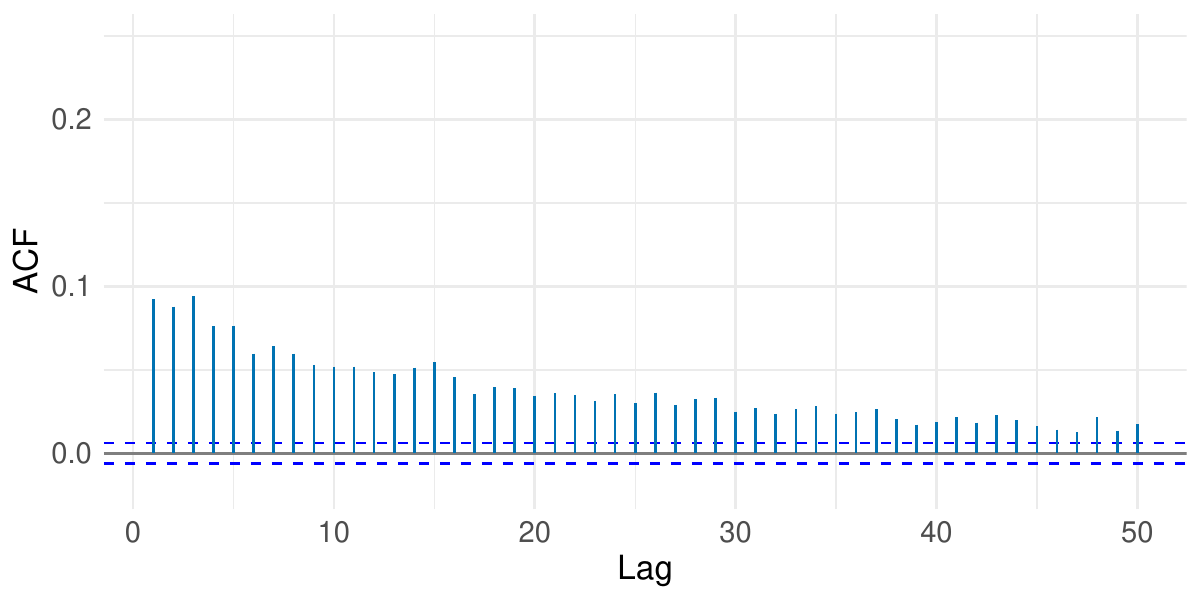}
		\caption{$\alpha = 0.07$ and $\beta = 0.1$}
	\end{subfigure}
	
	\caption{Autocorrelation functions of simulated inter-arrival times $\tau_n$ with Gamma residuals ($\mu_\varepsilon = 1$)}
	\label{Fig:acf_SE_gamma}
\end{figure}

To examine the dynamic and dependence properties implied by the model in Definition~\ref{Def:FRPP},
we simulate inter-arrival times using Gamma-distributed residuals.
The autocorrelation structure of the model is primarily governed by the relative magnitude
of the self-excitation parameter $\alpha$ and the decay rate $\beta$.
Provided the process operates in the stationary regime
(i.e., $\alpha < \beta \mu_\varepsilon$),
the degree of temporal persistence increases as the ratio $\alpha / \beta$ approaches unity.
This ratio controls how strongly and how persistently past events influence future
conditional intensities.

Figure~\ref{Fig:acf_SE_gamma} illustrates this mechanism by contrasting two parameter configurations.
When $\alpha = 0.095$ and $\beta = 0.1$, the conditional intensity $\Lambda_n$ decays slowly following an excitation,
which translates into pronounced duration clustering and a slowly decaying autocorrelation function of $\tau_n$.
In contrast, setting $\alpha = 0.07$ with the same decay rate leads to a faster dissipation of excitation effects,
resulting in weaker dependence and a more rapidly decreasing autocorrelation structure.

\section{Empirical analysis}\label{Sect:empirical}

\subsection{Data}

The point process models developed in Section~\ref{Sec:model} are applied to
ultrahigh-frequency limit order book (LOB) data for a liquid financial asset.
In such data, observed durations between events
(e.g., order submissions, trades, and cancellations)
exhibit substantial heterogeneity and variability.
The empirical distribution of inter-arrival times is typically heavy-tailed,
with extremely short durations interspersed with occasional but very long waiting times.
As a result, the data deviate markedly from the exponential distribution
commonly assumed in simpler benchmark models,
such as the exponential Hawkes process or the exponential autoregressive duration model.

In the empirical analysis, the duration (or inter-arrival time) is defined as
the elapsed time required for the mid-price to move by one tick.
The mid-price is computed as the average of the best bid and best ask prices,
and therefore changes by half a tick whenever either side of the book moves by one tick.
Consequently, a full one-tick movement of the mid-price typically results from
two consecutive half-tick changes in the same direction.
For example, a sequence of half-tick movements such as
up–down–up–up leads to a net upward movement of one tick,
thereby triggering a new event.

Table~\ref{tab:descriptive_stats} reports descriptive statistics for the AAPL
durations observed in December 2022, constructed using the one-tick
(two half-tick) mid-price movement criterion.
The data exhibit pronounced over-dispersion, with a standard deviation
($0.9937$) more than twice the mean ($0.4801$).
The resulting over-dispersion ratio of $2.0698$ indicates a substantial
departure from the Poisson benchmark.
Moreover, the large gap between the median ($0.1281$) and the mean,
together with a high skewness value ($5.7536$),
suggests a distribution that is strongly right-skewed.
Finally, the extremely large kurtosis ($61.4619$) provides clear evidence
of heavy-tailed behavior in the duration distribution.

\begin{table}[t]
	\centering
	\caption{Descriptive statistics of AAPL durations in seconds (December 2022)}
	\label{tab:descriptive_stats}
	\begin{tabular}{lr}
		\hline
		Statistic & Value \\
		\hline
		Number of Observations & 1,023,480 \\
		Mean & 0.4801 \\
		Standard Deviation & 0.9937 \\
		Minimum & $3.08\times10^{-7}$ \\
		Median & 0.1281 \\
		Maximum & 38.7261 \\
		Skewness & 5.7536 \\
		Kurtosis & 61.4619 \\
		Over-dispersion ($SD/Mean$) & 2.0698 \\
		\hline
	\end{tabular}
\end{table}

Figure~\ref{fig:dur_hist} compares the empirical duration distribution with a fitted exponential density.
The histogram shows a high concentration of very short durations near zero.
While the heavier right tail is not fully visible from the histogram due to scale compression,
it is confirmed by the descriptive statistics and systematic deviations from the exponential benchmark,
motivating the use of flexible residual distributions.

\begin{figure}[t]
	\centering
	\includegraphics[width=0.8\textwidth]{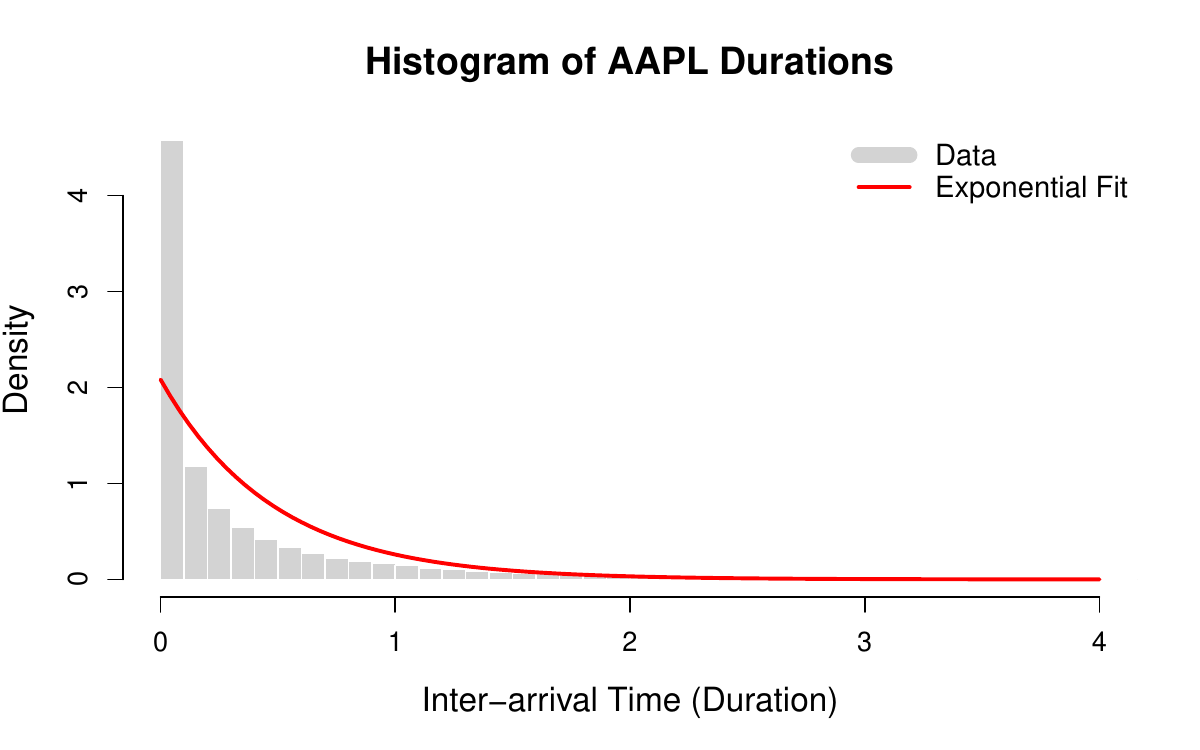}
	\caption{Histogram of AAPL durations with fitted exponential density}
	\label{fig:dur_hist}
\end{figure}

\subsection{Residual distributions}

To model the distributional properties of the duration innovations $\varepsilon$, we consider three candidate distributions:
the Gamma, generalized Gamma, and Burr Type XII distributions.
For identification and parsimony, all distributions are parameterized to satisfy $\mathbb{E}[\varepsilon]=1$.

\paragraph{Gamma distribution.}
As a benchmark, we employ the Gamma distribution.
Imposing the unit-mean restriction yields a single shape parameter $\kappa>0$, with scale fixed at $1/\kappa$.
The probability density function (PDF) is
\begin{equation}
	f(x; \kappa) = \frac{\kappa^{\kappa}}{\Gamma(\kappa)} x^{\kappa-1} \exp(-\kappa x), \quad x > 0.
\end{equation}
Under this parameterization, $\kappa$ controls dispersion, and the exponential distribution is recovered when $\kappa=1$.

\paragraph{Generalized Gamma distribution.}
To allow for greater flexibility in tail behavior, we also consider the generalized Gamma distribution.
For shape parameters $d,p>0$, the PDF is
\begin{equation}
	f(x; d, p) = \frac{p}{a^{d}\Gamma(d/p)} x^{d-1} \exp\left[-\left(\frac{x}{a}\right)^{p}\right], \quad x > 0,
\end{equation}
where the scale parameter $a$ is chosen to satisfy $\mathbb{E}[\varepsilon]=1$:
\begin{equation}
	a = \frac{\Gamma(d/p)}{\Gamma((d+1)/p)}.
\end{equation}
Here, $d$ governs the behavior near the origin, while $p$ controls tail thickness, allowing the distribution to accommodate varying degrees of leptokurtosis.

\paragraph{Burr Type XII Distribution.}
The Burr Type XII distribution is a highly flexible family capable of capturing a wide range of heavy-tailed behaviors \citep{burr1942cumulative}. 
It is characterized by two shape parameters $s_1, s_2 > 0$ and a scale parameter $c > 0$. 
Its probability density function (PDF) and hazard function $h(x)$ are given by
\begin{equation}
	f(x; s_1, s_2, c) = \frac{s_1 s_2}{c} \left( \frac{x}{c} \right)^{s_2 - 1} \left[ 1 + \left( \frac{x}{c} \right)^{s_2} \right]^{-(s_1 + 1)}, \quad x > 0
\end{equation}
\begin{equation}
	h(x; s_1, s_2, c) = \frac{s_1 s_2}{c} \frac{(x/c)^{s_2 - 1}}{1 + (x/c)^{s_2}}, \quad x > 0.
\end{equation}
The mean exists when $s_1 s_2 > 1$.
Imposing $\mathbb{E}[\varepsilon]=1$ yields
\begin{equation}
	c = \left[ s_1 \mathrm{B} \left( s_1 - \frac{1}{s_2}, 1 + \frac{1}{s_2} \right) \right]^{-1},
\end{equation}
where $\mathrm{B}(\cdot, \cdot)$ denotes the beta function. 
This distribution exhibits a power-law tail and is well suited for modeling highly skewed durations.

\subsection{Dynamic structures}

To conduct a comprehensive empirical comparison, we consider multiple combinations of latent dynamics and residual distributions.
Specifically, three classes of duration models are examined:
\begin{enumerate}
	\item \textbf{Self-Exciting (SE) models.}
	These models follow the dynamics in Definition~\ref{Def:FRPP}.
	The class includes SE-Exp \citep{Hawkes1}, SE-Gamma, SE-gGamma (generalized Gamma), and SE-Burr specifications.
	
	\item \textbf{ACD models.}
	These follow the linear update rule in Definition~\ref{Def:ACD}.
	We consider ACD-Exp \citep{Engle1998}, ACD-Gamma, ACD-gGamma, and ACD-Burr \citep{grammig2000non}.
	
	\item \textbf{Log-ACD models.}
	These employ the logarithmic update rule in Definition~\ref{Def:logACD}, which ensures positivity of the latent process without additional parameter constraints.
	This class includes logACD-Exp, logACD-Gamma, logACD-gGamma, and logACD-Burr models.
\end{enumerate}

Table~\ref{tab:estimation_results_all} reports the estimation results for above model specifications using AAPL duration data from December~16,~2022.
By combining three latent dynamics with four residual distributions, we assess their relative ability to capture ultra-high-frequency duration dynamics.
All parameter estimates are statistically significant; standard errors are reported in parentheses.

\begin{table}[!htbp]
	\centering
	\caption{Parameter estimates for AAPL durations (2022-12-16). Standard errors are reported in parentheses.}
	\label{tab:estimation_results_all}
	\renewcommand{\arraystretch}{1.3}
	\begin{tabular}{lcccc}
		\hline
		Model & $\mu$  & $\alpha$ & $\beta$   & Distribution parameters \\ 
		\hline
		SE-Exp 
		& 1.4768 & 90.9600 & 285.9000 &  \\
		& (0.0026) & (0.0051) & (0.0341) & \\

		SE-Gamma
		& 0.2712 & 0.0939 & 0.1068
		& $\kappa =0.3511$ (0.0018) \\
		& (0.0231) & (0.0038) & (0.0048) & \\
		
		SE-gGamma
		& 0.3682 & 0.1099 & 0.1319
		& $d=0.3340$ (0.0022) \\
		& (0.0538) & (0.0071) & (0.0116) & $p=1.2501$ (0.0286)\\
		
		SE-Burr 
		& 0.0327 & 0.0634 & 0.0743
		& $s_1=273.95$ (0.1444) \\
		& (0.0163) & (0.0024) & (0.0030) & $s_2=0.4765$ (0.0018) \\
		
		\hline
		& $b_0$ &  $a$ & $b_1$ & \\
		
		\hline
		ACD-Exp
		& $9.05\times10^{-5}$ & 0.0476 & 0.9523 &  \\
		& ($6.01\times10^{-6}$) & (0.0015) & (0.0012) & \\

		ACD-Gamma
		& $8.97\times10^{-5}$ & 0.0476 & 0.9524
		& $\kappa =0.3486$ (0.0018)\\
		& ($2.87\times10^{-5}$) & (0.0019) & (0.0018) &  \\
		
		ACD-gGamma
		& $4.68\times10^{-5}$ & 0.0481 & 0.9519
		& $d=0.3536$ (0.0025)\\
		& ($2.41\times10^{-5}$) & (0.0020) & (0.0019) &$p=0.9415$ (0.0151) \\
		
		ACD-Burr
		& $9.11\times10^{-5}$ & 0.0601 & 0.9399
		& $s_1=5.4367$ (0.0556) \\
		& ($1.04\times10^{-5}$) & (0.0029) & (0.0030) & $s_2=0.5066$ (0.0012) \\
		
		\hline
		logACD-Exp
		& 0.0270 & 0.0212 & 0.9664 &  \\
		& (0.0009) & (0.0007) & (0.0011) & \\
		
		logACD-Gamma
		& 0.0270 & 0.0212 & 0.9664
		& 	$\kappa=0.3440$ (0.0017)\\
		& (0.0015) & (0.0012) & (0.0019) &  \\
		
		logACD-gGamma
		& 0.0369 & 0.0266 & 0.9604
		& $d=0.3698$ (0.0030) \\
		& (0.0021) & (0.0015) & (0.0022) & $p=0.7964$ (0.0146) \\
		
		logACD-Burr
		& 0.0678 & 0.0392 & 0.9486
		& $s_1=363.29$(0.3707) \\
		& (0.0033) & (0.0019) & (0.0026) & $s_2=0.4726$ (0.0017) \\
		
		\hline
	\end{tabular}
\end{table}

Notably, the SE-Exp model produces disproportionately large estimates for $\mu$, $\alpha$, and $\beta$.
This behavior reflects the inability of the exponential residual to accommodate the pronounced heterogeneity and over-dispersion observed in the data.
As a result, the model compensates by inflating the baseline intensity and excitation parameters.
In contrast, SE models incorporating flexible residuals (Gamma, gGamma, and Burr) exhibit significantly smaller and more stable parameter estimates, better reflecting the underlying dynamics.

For ACD-type models, Table~\ref{tab:estimation_results_all} indicates strong persistence,
with $a + b_1$ close to unity across all specifications.
The estimated shape parameter $\kappa \approx 0.34$ confirms significant over-dispersion relative to the exponential benchmark. 
Similarly, the non-unit tail parameters in the generalized Gamma and Burr specifications
underscore the importance of flexible residual distributions
for capturing both the high density of short durations and the heavy right tail observed in the data.

\subsection{Intraday estimation}

To account for the diurnal patterns and time-varying statistical characteristics of durations, we adopt a rolling-window estimation scheme.
Specifically, the model parameters are updated locally for each estimation window using the 5,000 most recent mid-price events (averaging approximately 45 minutes). 
This dynamic intraday estimation allows the model to adapt to changing market conditions and effectively captures time-of-day effects without requiring a deterministic seasonality component. 
Based on these localized estimates, we compute the expected interarrival times for the subsequent 100 out-of-sample events using Eqs.~\eqref{Eq:cond_tau_ACD} and~\eqref{Eq:cond_tau_SE}.

Forecasting requires reconstruction of the latent state associated with each model.
For ACD-type models, the conditional duration process $X_n$ is inferred recursively using the estimated parameters
$\hat b_0$, $\hat a$, and $\hat b_1$.
For the self-exciting flexible residual point process, the latent state process $\Lambda_n$ is reconstructed using $\hat\mu$, $\hat\alpha$, and $\hat\beta$.

For models with flexible residual distributions (Gamma, generalized Gamma, or Burr),
conditional expectations are computed using the corresponding estimated shape parameters
($\hat{\kappa}$, $(\hat d,\hat p)$, or $(\hat s_1,\hat s_2)$) together with the inferred latent states.
During the out-of-sample period, latent processes are updated recursively as new events arrive,
ensuring that each forecast is based on the most recent state information.

This procedure is applied sequentially throughout the trading day, advancing the window by 100 events at each step.
As a result, forecasts are produced for almost the entire sample except for the initial 5{,}000 events used for the first estimation.
Predicted durations are then compared with realized interarrival times to assess predictive performance.

As a benchmark, we consider a log-ACI(1,1) model in Definition~\ref{Def:logACI} for durations.
In addition, 
to accommodate long-memory dynamics in financial durations,
we consider the semiparametric fractionally integrated (FI) log-ACD(1,$d$,1) model proposed by
\citet{FengZhou2015}.
The model assumes
\begin{equation}
	\log \tau_n = \mu + z_n = \mu + \log \psi_n + \varepsilon_n,
\end{equation}
where $z_n := \log \psi_n + \varepsilon_n$.
The latent process $z_n$ is assumed to satisfy
\begin{equation}
	(1 - \phi B)(1-B)^d z_n = (1 + \theta B)\varepsilon_n
\end{equation}
with parameters satisfying
\[
d \in (0,1/2), \quad |\phi|<1, \quad |\theta|<1
\]
where $d$ governs the hyperbolic decay of the autocorrelation function.
Note that the FI log-ACD model falls outside the general framework in Definition~\ref{Def:general}.

The predictive evaluation is conducted using mid-price data from December~1 to 31, 2022.
For each trading day, model parameters are estimated using the rolling-window procedure described above, and the estimation–prediction cycle is repeated sequentially across the entire month.

\subsection{Performance result}

Table~\ref{Table:RMSE} summarizes the out-of-sample predictive performance of the various models averaged over the sample period of December 2022. 
The performance was evaluated using the relative root mean squared 
error (rRMSE) between the predicted and actual interarrival times, 
the coefficient of determination (R-squared), the log-ratio root 
mean squared error (LRMSE), defined as 
$\sqrt{n^{-1}\sum_t(\log\hat{\tau}_t-\log\tau_t)^2}$,
which penalizes 
over- and under-prediction symmetrically on the ratio scale, 
the Kolmogorov--Smirnov statistic for residuals, and the mean 
squared Wasserstein distance (W) between the empirical and 
model-based distributions of residuals.

\begin{table}[t]
	\centering
	\caption{Out-of-sample prediction and goodness-of-fit results for various models using AAPL data in December 2022.}
	\label{Table:RMSE}
	\begin{tabular}{lccccc}
		\hline
		Model & rRMSE & R-squared & LRMSE & KS & W \\
		\hline
		
		SE-Exp (Hawkes) & 1.8386 & 0.0782 & 2.8907 & $0.1381^{***}$ & 0.0052 \\
		SE-Gamma        & 1.8542 & 0.1374 & 2.8863 & $0.0507^{***}$ & 0.0009 \\
		SE-gGamma       & 1.8739 & 0.0521 & 2.8720 & $0.0486^{***}$ & 0.0008 \\
		SE-Burr         & 1.8033 & 0.1134 & 2.9127 & $0.0659^{***}$ & 0.0016 \\
		
		\hline
		ACD-Exp         & 1.8544 & 0.1371 & 2.8832 & $0.2020^{***}$ & 0.0147 \\
		ACD-Gamma       & 1.8533 & 0.1382 & 2.8815 & $0.0236^{***}$ & 0.0002 \\
		ACD-gGamma      & 1.8555 & 0.1372 & 2.8851 & $0.2077^{***}$ & 0.0159 \\
		ACD-Burr        & 1.8589 & 0.1329 & 2.9190 & $0.0589^{***}$ & 0.0012 \\
		
		\hline
		logACD-Exp      & 1.8639 & 0.1284 & 2.8629 & $0.2053^{***}$ & 0.0169 \\
		logACD-Gamma    & 1.8640 & 0.1282 & 2.8782 & $0.0209^{***}$ & 0.0001 \\
		logACD-gGamma   & 1.8779 & 0.1151 & 2.9080 & $0.0186^{***}$ & 0.0002 \\
		logACD-Burr     & 1.9226 & 0.0724 & 2.9506 & $0.0386^{***}$ & 0.0005 \\
		
		\hline
		logACI          & 1.8428 & 0.1331 & 2.8690 & $0.2007^{***}$ & 0.0145 \\
		\hline
		
		FI-logACD       & 1.8815 & 0.1117 & 2.8768 & $0.1738^{***}$ & 0.0147 \\
		\hline
	\end{tabular}\\[4pt]
	{\footnotesize $^{***}$ significant at the 1\% level.}
\end{table}

For SE models, the SE-Burr specification improves upon the exponential SE model 
in terms of rRMSE and also yields a higher R-squared, 
indicating a noticeable gain in predictive accuracy.
In contrast, the SE-Gamma and SE-gGamma models do not provide improvements in rRMSE, although SE-Gamma achieves the highest R-squared within the SE class.

From the goodness-of-fit viewpoint, however, all flexible specifications substantially reduce the KS and W statistics relative to SE-Exp, with SE-Gamma and SE-gGamma showing the strongest improvements.
This suggests that while the Gamma-type residuals do not translate into better point prediction accuracy, they provide a markedly better description of the distributional properties of the interarrival times.

Overall, these results indicate that the choice of residual distribution affects different aspects of model performance in distinct ways.
The Burr specification delivers the balanced improvement by enhancing both predictive accuracy and distributional fit, whereas the Gamma-based specifications mainly improve the latter.

For the ACD models, in terms of predictive accuracy, all ACD variants deliver relatively similar rRMSE and R-squared values, indicating that replacing the exponential distribution with Gamma-, generalized Gamma-, or Burr-based alternatives does not lead to a meaningful improvement in forecasting performance.

In contrast, the goodness-of-fit results show substantial differences across distributions.
The ACD-Gamma model achieves by far the lowest KS and W statistics among the ACD specifications, indicating a much better match to the empirical distribution of interarrival times.
The ACD-Burr model also improves the fit relative to the exponential case, whereas the ACD-gGamma specification provides little or no gain in this respect.

Log-ACD models display a pattern very similar to that of the standard ACD specifications.
Their predictive accuracy is comparable to, but not better than, that of the ACD models, while the goodness-of-fit improves markedly when flexible residual distributions—especially the Gamma and generalized Gamma—are used.

The log-ACI model delivers a competitive rRMSE among all specifications, while its R-squared remains at a moderate level.
However, its relatively large KS and W statistics reveal a poor distributional fit compared with the other models with flexible residuals.

Finally, although the FI-logACD model introduces long-memory dynamics, such extensions do not systematically improve out-of-sample predictive performance in the present empirical setting.

Regarding LRMSE, the log-ACD family and FI-logACD tend to perform more favorably than the SE and ACD models, suggesting that log-scale specifications provide relatively better predictions for short durations. 
In contrast, SE-Burr achieves the lowest rRMSE, reflecting superior prediction of large duration events through the flexible Burr tail; the two metrics thus capture complementary aspects of predictive performance.
Overall, the results suggest that distributional flexibility 
is particularly beneficial within the self-exciting framework. 
In this setting, flexible residuals substantially improve the 
goodness of fit and, in the case of the Burr specification, 
are also associated with a measurable gain in predictive accuracy.

\begin{table}[t]
	\centering
	\caption{Diebold--Mariano test results for equal predictability
		against SE-Burr using AAPL data in December 2022.}
	\label{Table:DM}
	\begin{tabular}{lrrl@{\hspace{1.5em}}rrl}
		\hline
		& \multicolumn{3}{c}{\textit{Panel A: rRMSE loss}}
		& \multicolumn{3}{c}{\textit{Panel B: LRMSE loss}} \\
		\cmidrule(lr){2-4}\cmidrule(lr){5-7}
		Model & $\bar{d}$ & DM & & $\bar{d}$ & DM & \\
		\hline
		SE-Exp (Hawkes) & $-0.1237$ & $-51.860$ & $***$ & $0.0820$ & $19.930$  & $***$ \\
		SE-Gamma        & $-0.6955$ & $-26.185$ & $***$ & $0.3593$ & $231.708$ & $***$ \\
		SE-gGamma       & $-0.1094$ & $-15.152$ & $***$ & $0.3248$ & $158.967$ & $***$ \\
		ACD-Exp         & $-0.6965$ & $-26.507$ & $***$ & $0.3696$ & $146.059$ & $***$ \\
		ACD-Gamma       & $-0.6930$ & $-26.381$ & $***$ & $0.3798$ & $146.338$ & $***$ \\
		ACD-gGamma      & $-0.7078$ & $-26.863$ & $***$ & $0.3664$ & $132.577$ & $***$ \\
		ACD-Burr        & $-0.7160$ & $-27.753$ & $***$ & $0.2058$ & $75.114$  & $***$ \\
		logACD-Exp      & $-0.7303$ & $-27.513$ & $***$ & $0.4904$ & $163.340$ & $***$ \\
		logACD-Gamma    & $-0.7325$ & $-27.612$ & $***$ & $0.4353$ & $143.047$ & $***$ \\
		logACD-gGamma   & $-0.7492$ & $-28.214$ & $***$ & $0.3830$ & $119.394$ & $***$ \\
		logACD-Burr     & $-0.7837$ & $-30.229$ & $***$ & $0.2882$ & $82.581$  & $***$ \\
		logACI          & $-0.5715$ & $-21.642$ & $***$ & $0.3515$ & $136.822$ & $***$ \\
		FI-logACD       & $-0.8347$ & $-31.937$ & $***$ & $0.4372$ & $102.547$ & $***$ \\
		\hline
	\end{tabular}
\end{table}

To formally assess whether the predictive superiority of SE-Burr is statistically significant, we apply the modified Diebold--Mariano test of \citet{Diebold1995} as corrected by \citet{Harvey1997} for equal predictability between models.
Table~\ref{Table:DM} reports pairwise test results using SE-Burr as the benchmark under two loss functions.
Panel~A uses the squared relative error loss, $L_t = \{(\tau_t - \hat{\tau}_t)/\bar{\tau}\}^2$, consistent with the rRMSE criterion in Table~\ref{Table:RMSE}.
Panel~B uses the log-ratio squared error loss,
$L_t = (\log\hat{\tau}_t-\log\tau_t)^2$, 
consistent with LRMSE.
In both panels, $\bar{d} = n^{-1}\sum_t(L_t^{\text{SE-Burr}} - L_t^{\text{comp}})$ denotes the mean loss differential, so that $\bar{d} < 0$ indicates SE-Burr is the better-performing model, and $^{***}$ denotes significance at the 1\% level.

Under the rRMSE loss (Panel~A), $\bar{d} < 0$ for all competitor models, and the null hypothesis of equal predictability is rejected at the 1\% level in every case.
This confirms that SE-Burr achieves statistically significantly lower prediction error than all other specifications when evaluated on an absolute scale.

Under the LRMSE loss (Panel~B), the results are reversed: 
$\bar{d} > 0$ for all competitors, indicating that SE-Burr incurs significantly higher loss than every other model on the ratio scale.
As discussed above, this reflects the sensitivity of LRMSE to short durations, for which log-scale specifications have a natural advantage.
Taken together, the two panels confirm that the choice of loss function captures fundamentally different aspects of predictive performance: SE-Burr excels at predicting large duration events through its flexible Burr tail, while log-scale models provide
more accurate relative predictions across the full range of duration magnitudes.

\subsection{Goodness-of-fit}

To evaluate the distributional fit of each model, we construct probability–probability (P–P) plots comparing the model-implied conditional distribution of interarrival times with empirical observations. 
We evaluate the conditional cumulative distribution function $F_{\tau}$, as defined in Eq.~\eqref{Eq:F_tau} based on the estimated parameters.
Then, by inserting the observed interarrival times into the conditional CDF and plotting the resulting values against the uniform $[0,1]$ quantiles from the theoretical distribution, the P–P plot is generated. 
A model with good fit should produce points that lie close to the $45^\circ$ line.

\begin{figure}[t]
	\centering
	
	\begin{subfigure}[t]{0.4\textwidth}
		\includegraphics[width=\textwidth]{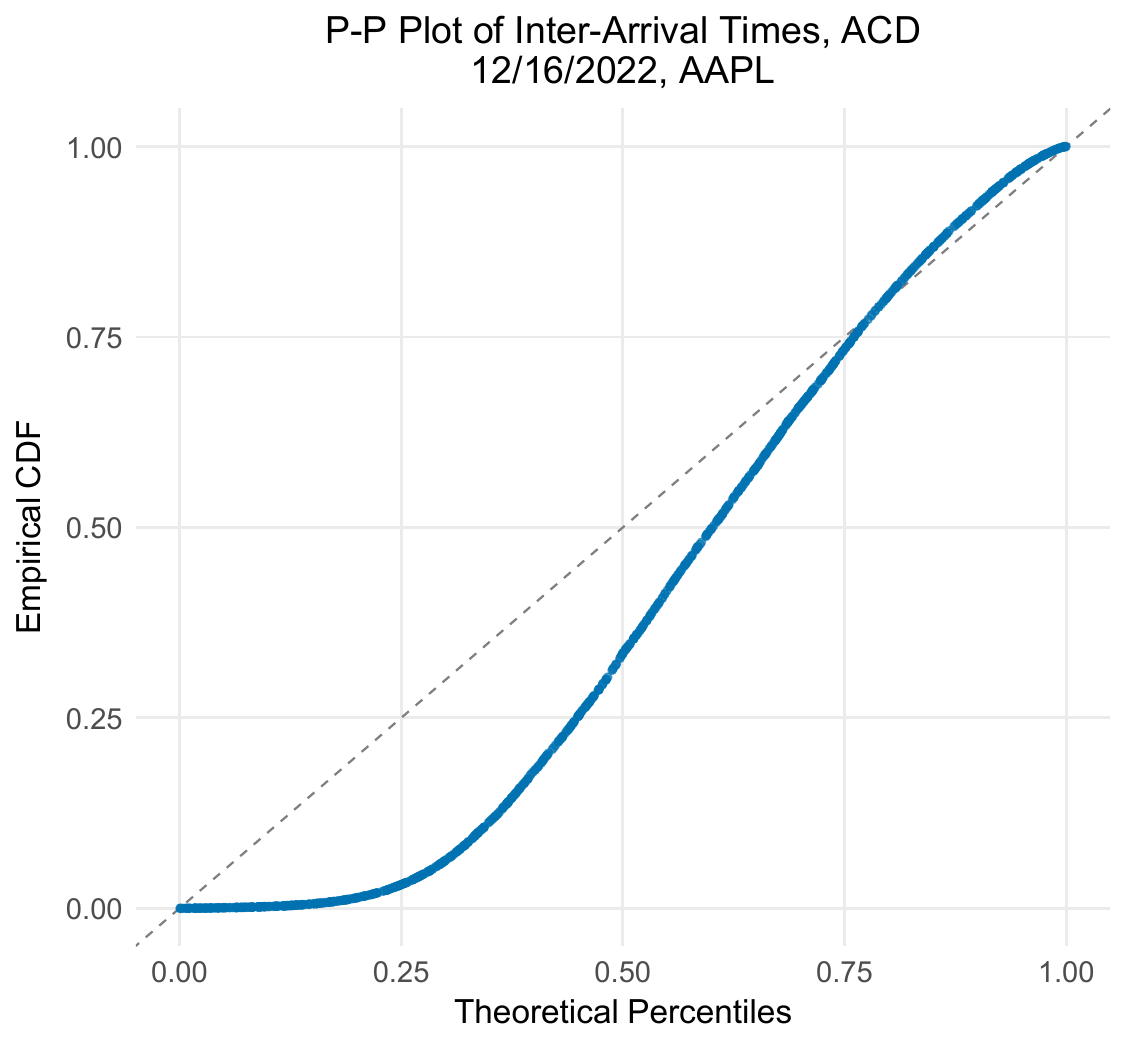}
		\caption{ACD-Exp}
	\end{subfigure}
	\quad\quad
	\begin{subfigure}[t]{0.4\textwidth}
		\includegraphics[width=\textwidth]{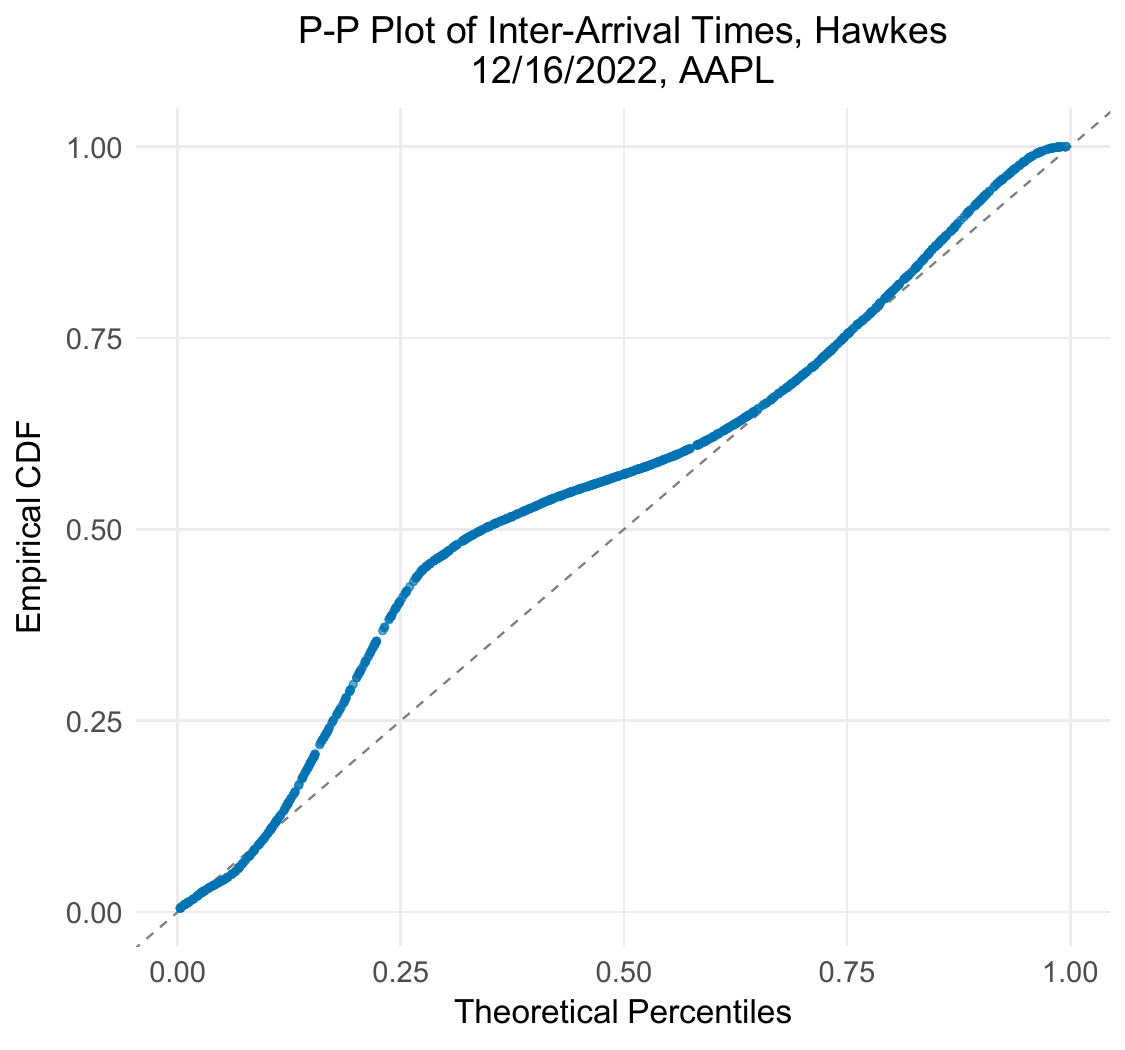}
		\caption{SE-Exp (Hawkes)}
	\end{subfigure}
	
	\vspace{0.5cm}
	
	\begin{subfigure}[t]{0.4\textwidth}
		\includegraphics[width=\textwidth]{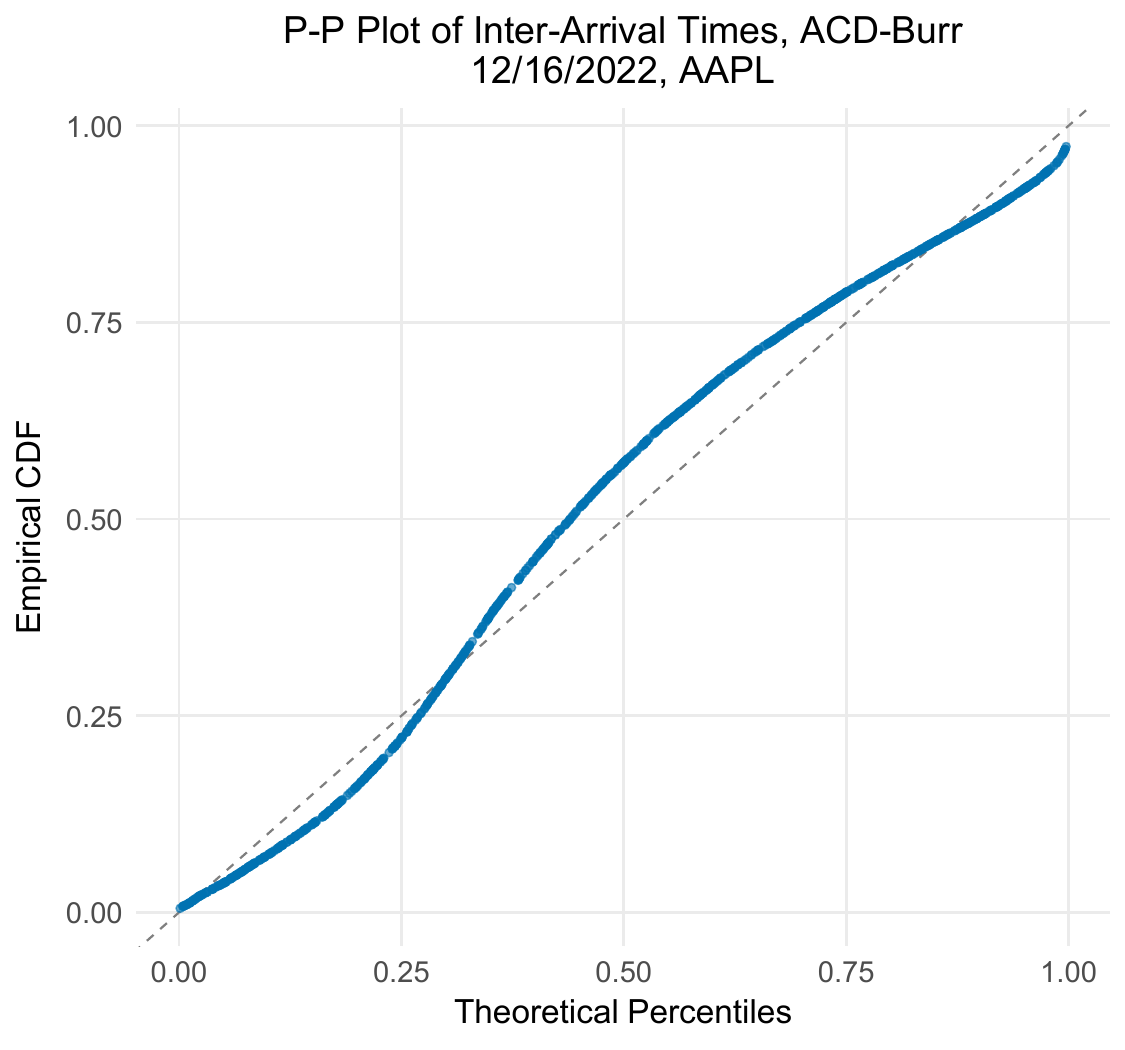}
		\caption{ACD–Burr}
	\end{subfigure}
	\quad\quad
	\begin{subfigure}[t]{0.4\textwidth}
		\includegraphics[width=\textwidth]{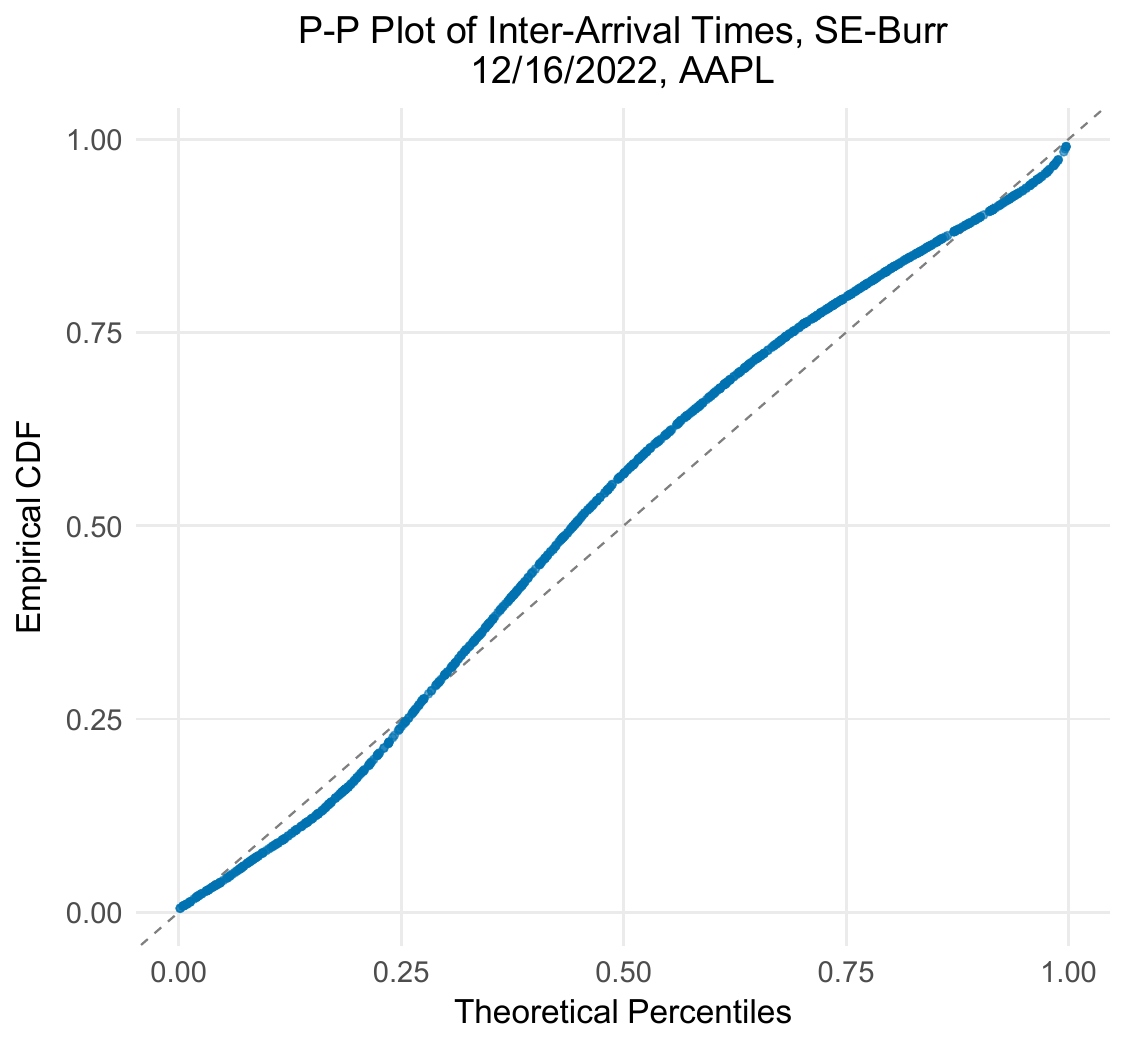}
		\caption{SE–Burr}
	\end{subfigure}
	
	\caption{P–P plots of interarrival times for four models using AAPL mid-price data on December 16, 2022.}
	\label{Fig:pp_plots}
\end{figure}

Figure~\ref{Fig:pp_plots} shows the P–P plots for each model, generated using AAPL mid-price data from December~16,~2022, based on rolling-window estimation with 5,000 observations.
The ACD model with an exponential interarrival distribution noticeably deviates from the diagonal, indicating limited flexibility in capturing the empirical distribution. 
The SE-Exp model performs better, but significant misalignment remains.  
In contrast, the ACD-Burr and SE-Burr models present substantially improved fit, with the SE-Burr model yielding the closest alignment to the diagonal line. 
This result suggests that combining a flexible heavy-tailed distribution, such as the Burr, with a self-exciting and decaying structure of $\Phi$ yields a more accurate representation of the interarrival time dynamics in high-frequency trading data.
The Gamma-family models produced similar results, but are omitted here for space considerations.

However, even the flexible distribution based models do not achieve a perfect fit, indicating that room for improvement remains, either in the specification of the residual distribution or in the functional form of the latent dynamics (e.g., the $\Psi$ or $\Phi$ structures) to better capture the complexities of the data. 
This is consistent with the classic finding of \citet{bauwens2004comparison} 
that capturing the conditional density of trade durations is 
empirically challenging even with the highly flexible Burr family, 
likely due to the clustering of very short trade durations in 
high-frequency data.

To further evaluate the goodness-of-fit of each model, we examine the distribution of exponential residuals, defined as the model-implied probability integral transforms of the observed interarrival times, as given in Eq.~\eqref{Eq:eres}. 
Under a correctly specified model, these exponential residuals should follow the standard exponential distribution with unit rate. 
Figure~\ref{Fig:exp_residuals} plots the histograms of the residuals for the ACD-Exp, SE-Exp, ACD-Burr and SE-Burr models.

\begin{figure}[t]
	\centering
	
	\begin{subfigure}[t]{0.45\textwidth}
		\includegraphics[width=\textwidth]{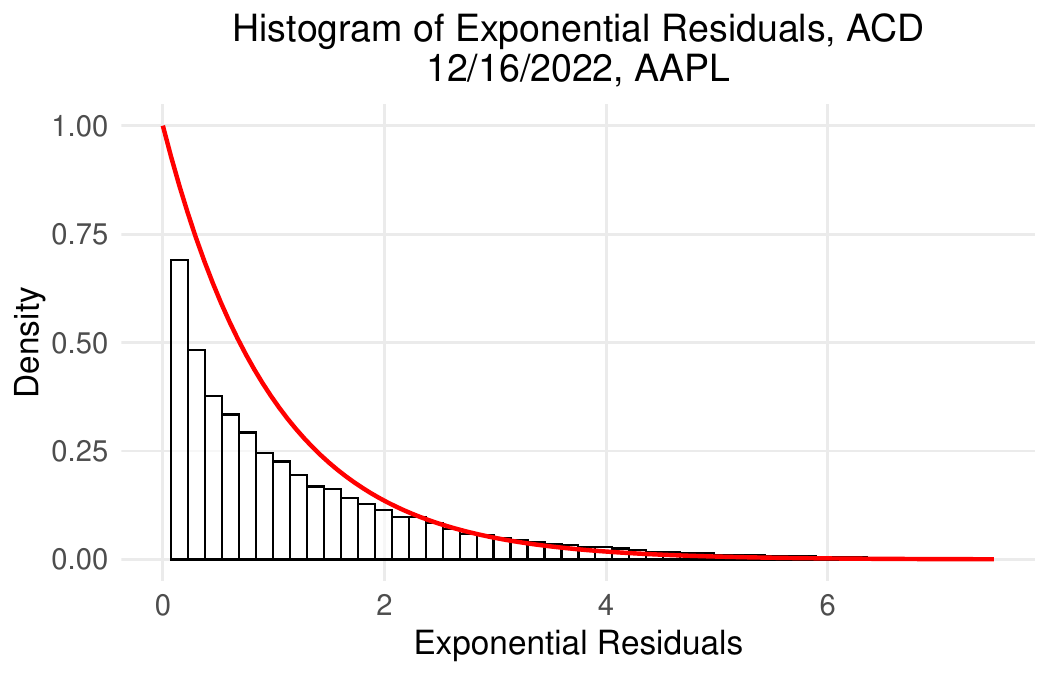}
		\caption{ACD-Exp}
	\end{subfigure}
	\quad
	\begin{subfigure}[t]{0.45\textwidth}
		\includegraphics[width=\textwidth]{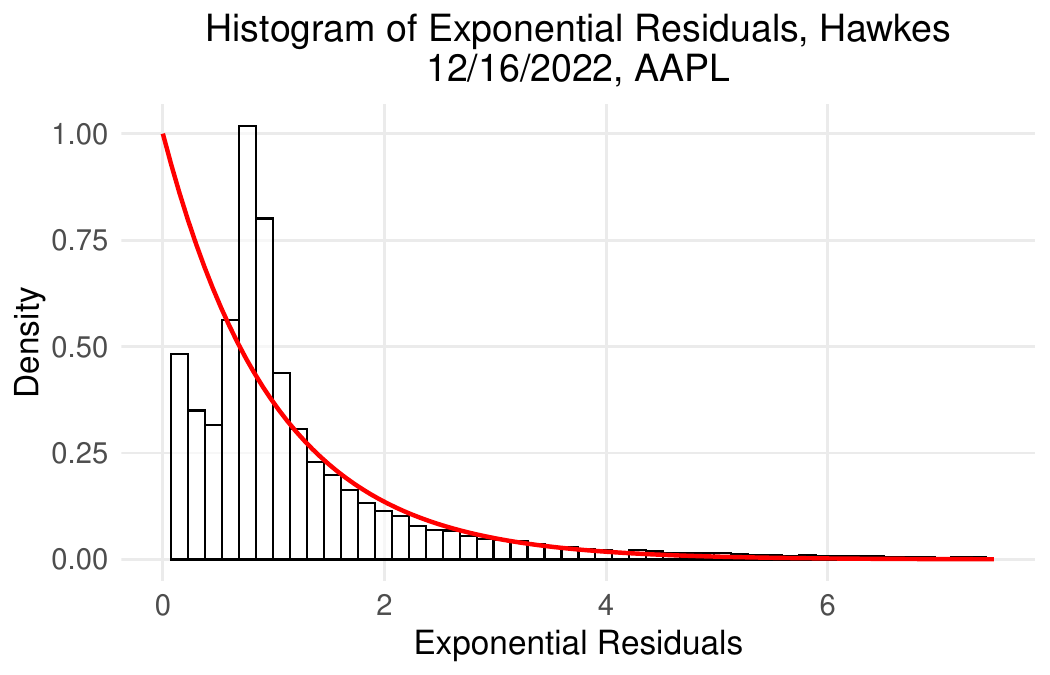}
		\caption{SE-Exp}
	\end{subfigure}
	
	\vspace{0.5cm}
	
	\begin{subfigure}[t]{0.45\textwidth}
		\includegraphics[width=\textwidth]{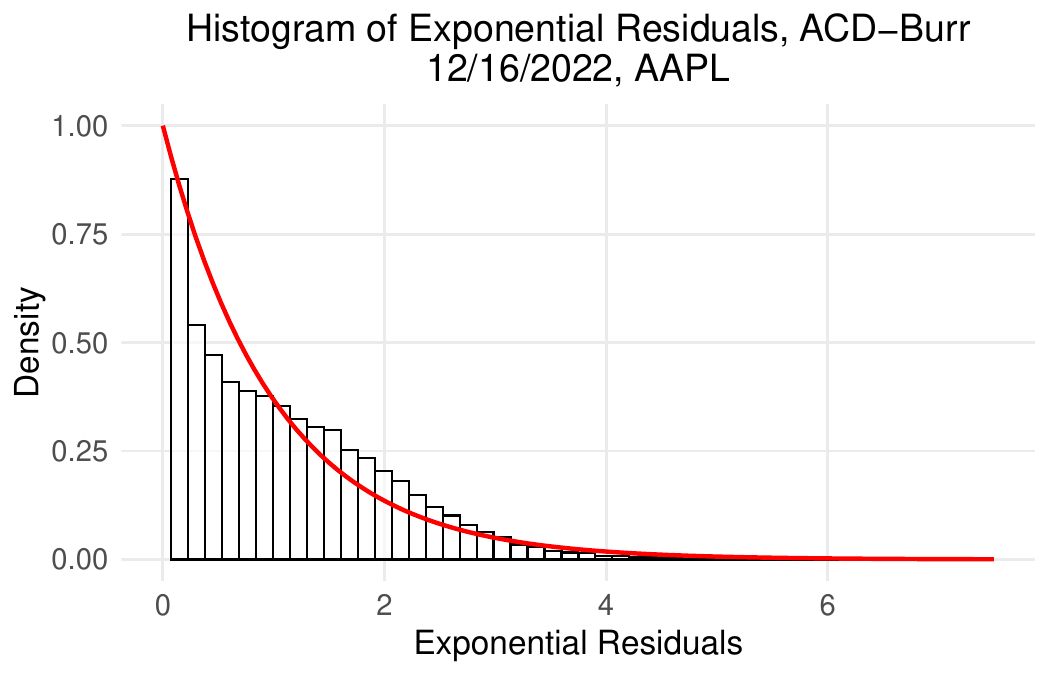}
		\caption{ACD–Burr}
	\end{subfigure}
	\quad
	\begin{subfigure}[t]{0.45\textwidth}
		\includegraphics[width=\textwidth]{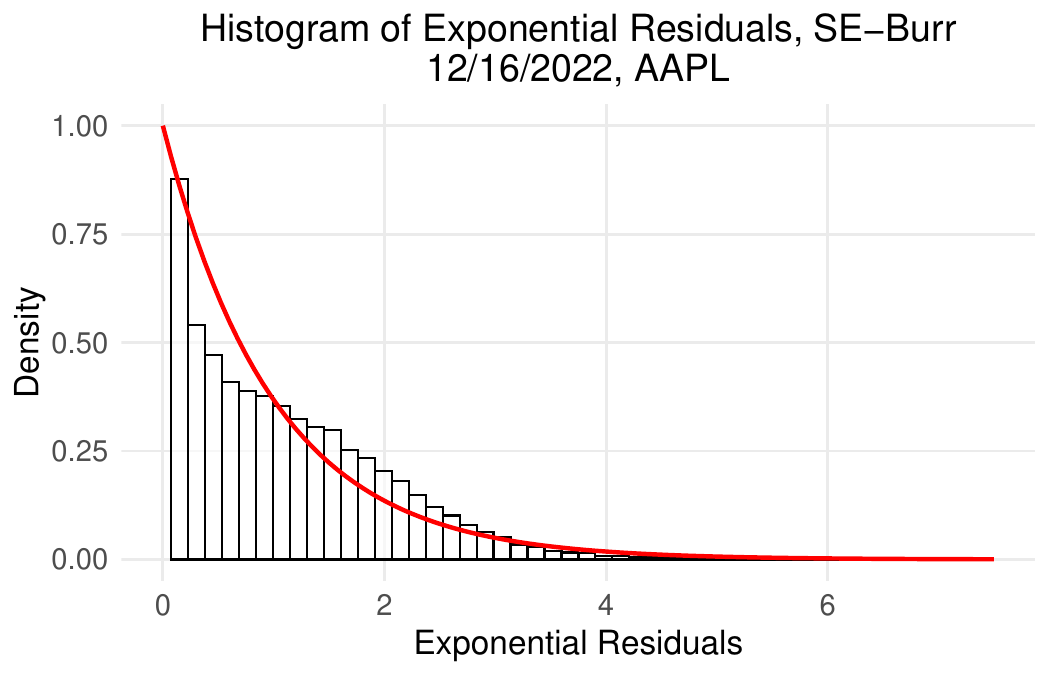}
		\caption{SE–Burr}
	\end{subfigure}
	
	\caption{Histograms of exponential residuals for four models using AAPL data on December 16, 2022.}
	\label{Fig:exp_residuals}
\end{figure}

The histograms of the exponential residuals from the ACD-Exp and SE-Exp (Hawkes) models present significant deviations from the unit exponential distribution. 
In contrast, the histograms of the residuals from the ACD-Burr and SE-Burr models are much closer in shape to the exponential density. 
These findings support the interpretation that the Burr-based model better fits the observed data.
The residuals from the Gamma-based model showed a similar pattern to those from the Burr-based model, but are omitted here for space considerations.
These findings are consistent with the P–P plot results, reinforcing the advantages of using heavy-tailed distributions in modeling high-frequency financial durations.

Nonetheless, the remaining deviations from the unit exponential 
distribution, particularly in the right tail, suggest that 
no model fully resolves the challenges documented in 
\citet{bauwens2004comparison}, and further refinements may be warranted.

Figure~\ref{Fig:Etau_dynamics} displays the inferred dynamics of the conditional expected interarrival times over the course of a trading day, with estimates obtained from each model using a rolling-window approach.
Despite the differences in model structure, all models display a shared intraday pattern. 
The expected interarrival times are shortest near the market open and close time, indicating periods of heightened price activity and trading intensity.
Conversely, the longest interarrival times occur between 2:00 and 3:00 p.m., suggesting a lull in market activity during that time window.

\begin{figure}[t]
	\centering
	
	\begin{subfigure}[t]{0.45\textwidth}
		\includegraphics[width=\textwidth]{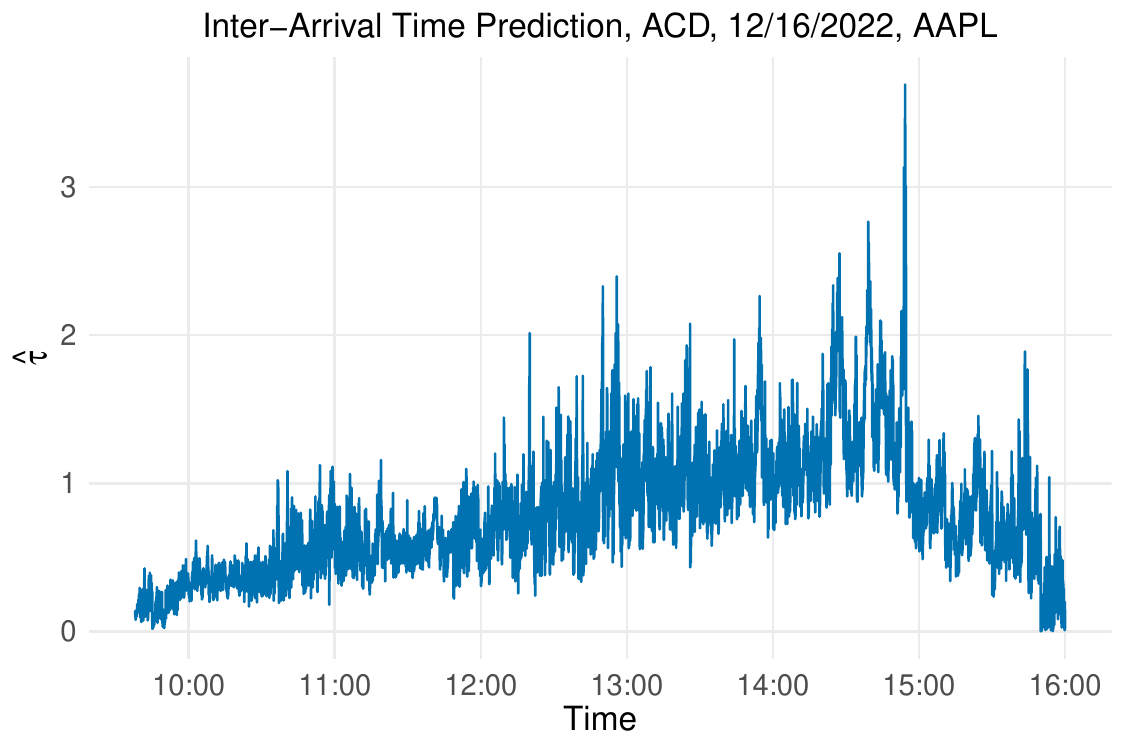}
		\caption{ACD-Burr}
	\end{subfigure}
	\quad
	\begin{subfigure}[t]{0.45\textwidth}
		\includegraphics[width=\textwidth]{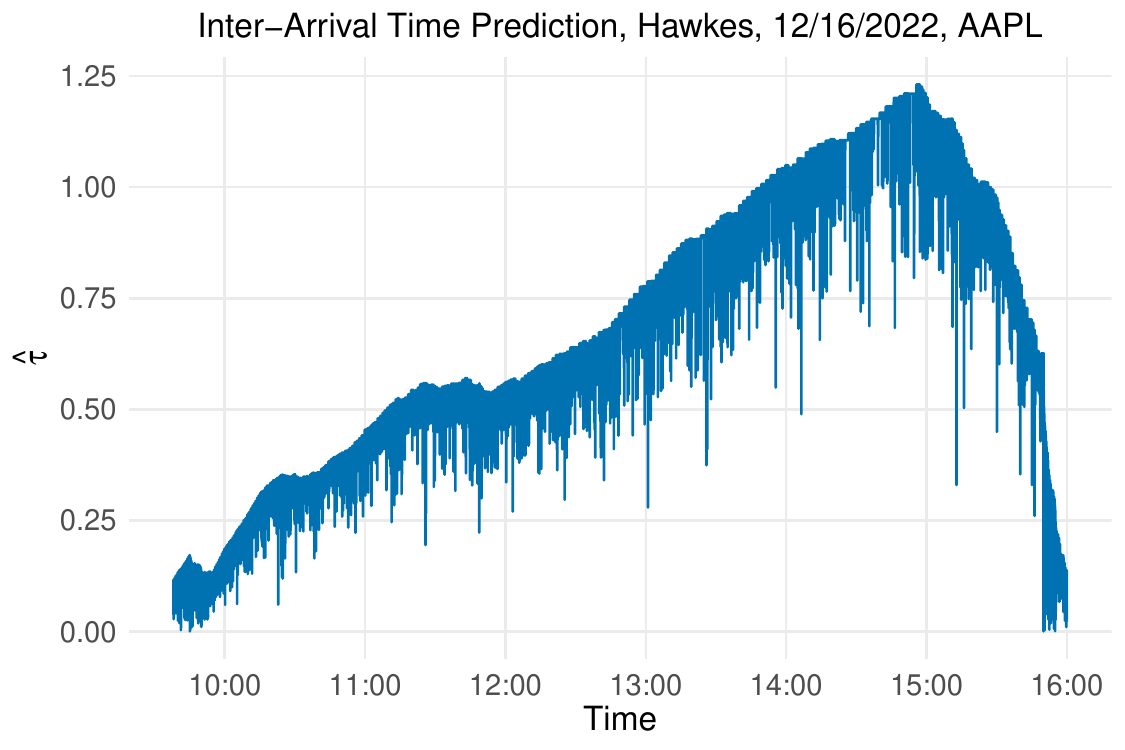}
		\caption{SE-Exp}
	\end{subfigure}
	
	\vspace{0.5cm}
	
	\begin{subfigure}[t]{0.45\textwidth}
		\includegraphics[width=\textwidth]{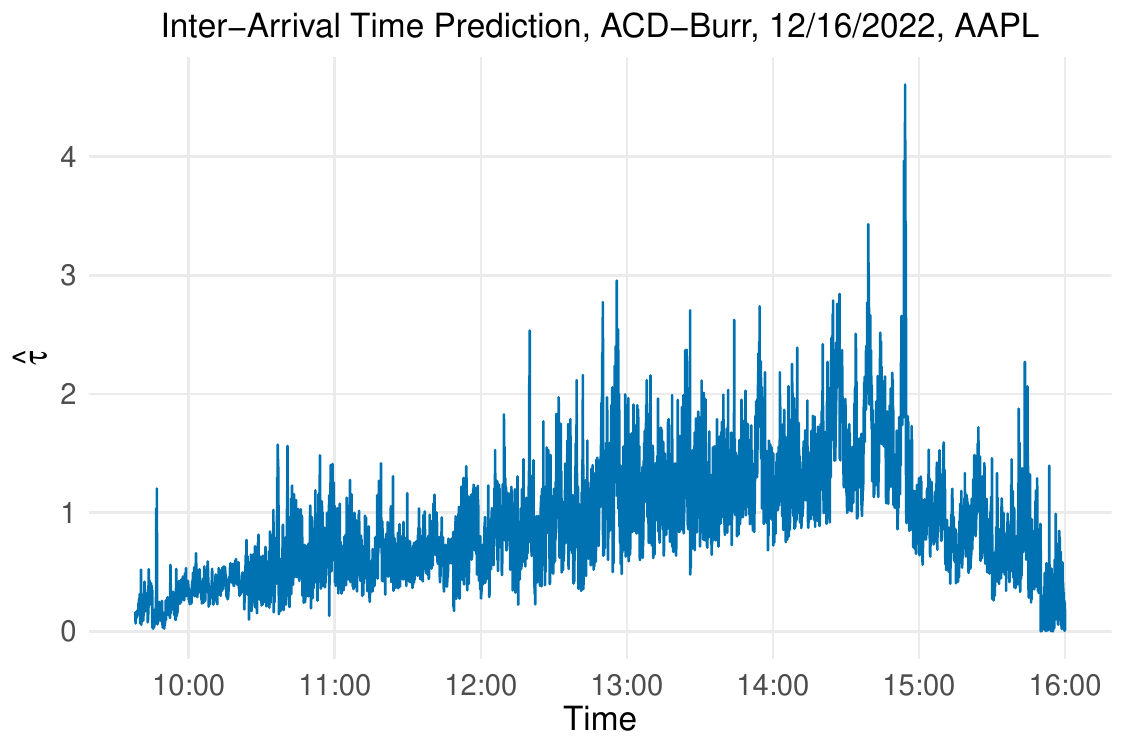}
		\caption{ACD–Burr}
	\end{subfigure}
	\quad
	\begin{subfigure}[t]{0.45\textwidth}
		\includegraphics[width=\textwidth]{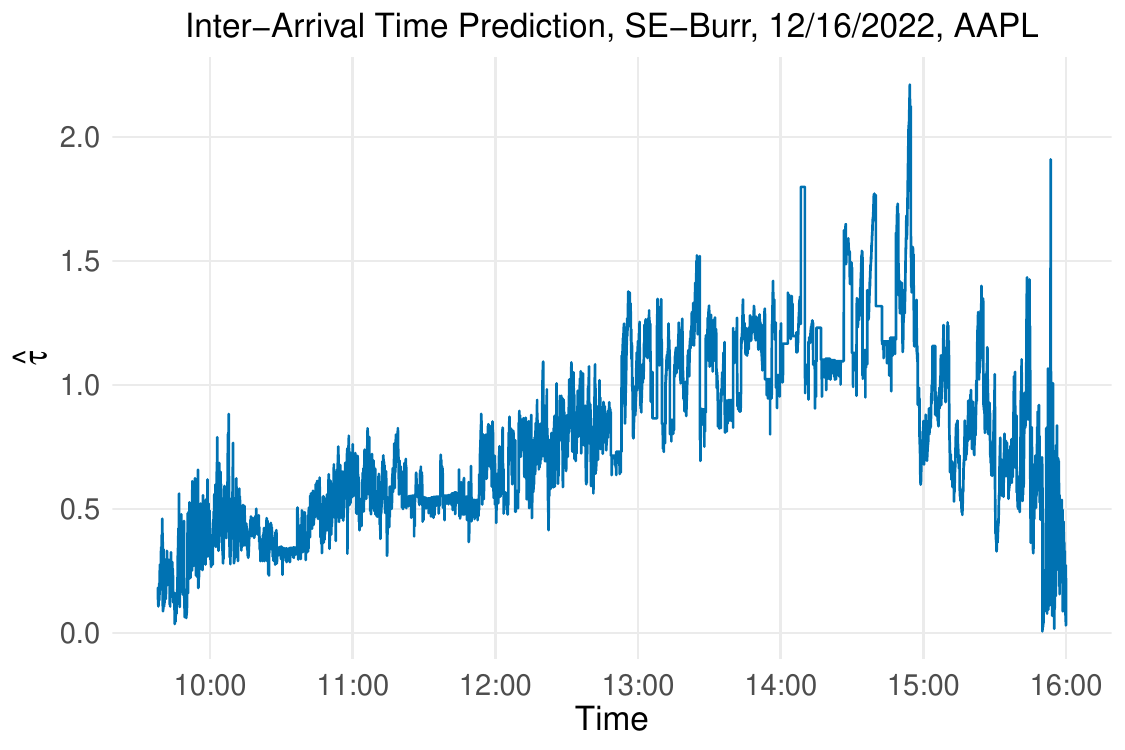}
		\caption{SE–Burr}
	\end{subfigure}
	
	\caption{Dynamics of the expected interarrival times predicted by four models, based on AAPL mid-price data on December 16, 2022.}
	\label{Fig:Etau_dynamics}
\end{figure}

Among the models, the SE-Exp (Hawkes) process reveals the most distinct dynamic pattern. 
The SE-Burr model presents the most stable dynamics, with relatively smooth, less volatile changes in the expected interarrival times throughout the day. 
These differences reflect how each model captures intraday market dynamics and trading-intensity variability.

The comparison in Table~\ref{Table:RMSE} of the ACD-Burr model and SE-Burr residual point process revealed that the latter provides slightly better predictive performance based on the rRMSE and  R-squared values. 
In contrast, the ACD-Burr model exhibits a modest advantage in capturing the residual dependence structure. 
Figure~\ref{Fig:acf} displays the ACF of exponential residuals for SE-Burr, ACD-Burr, and FI-logACD models. 


\begin{figure}[ht]
	\centering
	\begin{subfigure}{0.32\textwidth}
		\includegraphics[width=\textwidth]{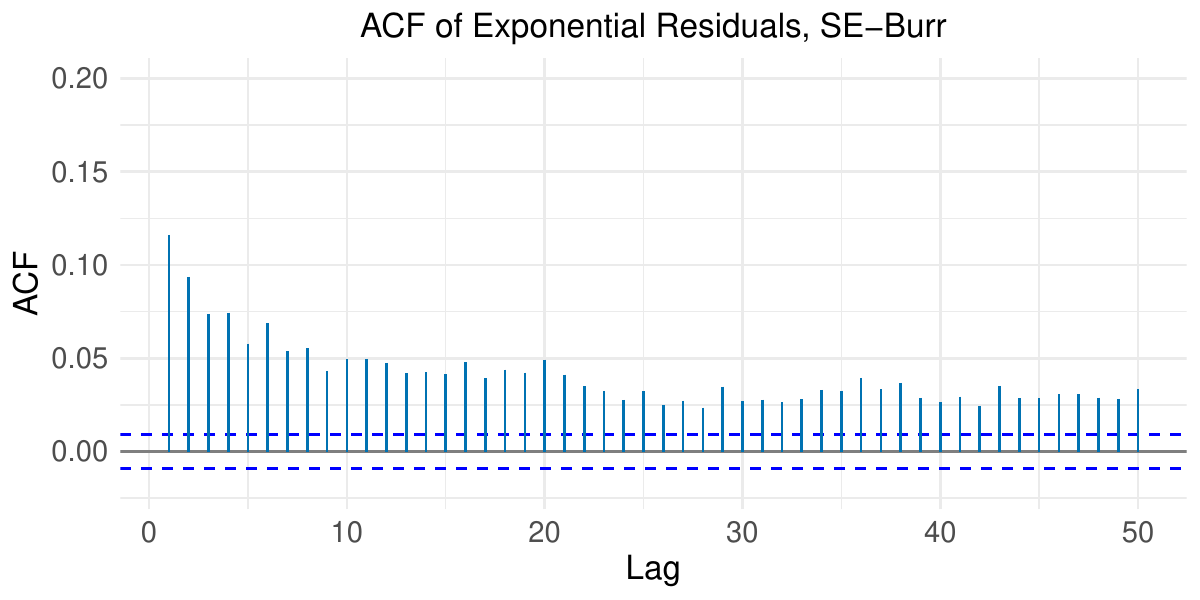}
		\caption{SE--Burr}
	\end{subfigure}
	\hfill
	\begin{subfigure}{0.32\textwidth}
		\includegraphics[width=\textwidth]{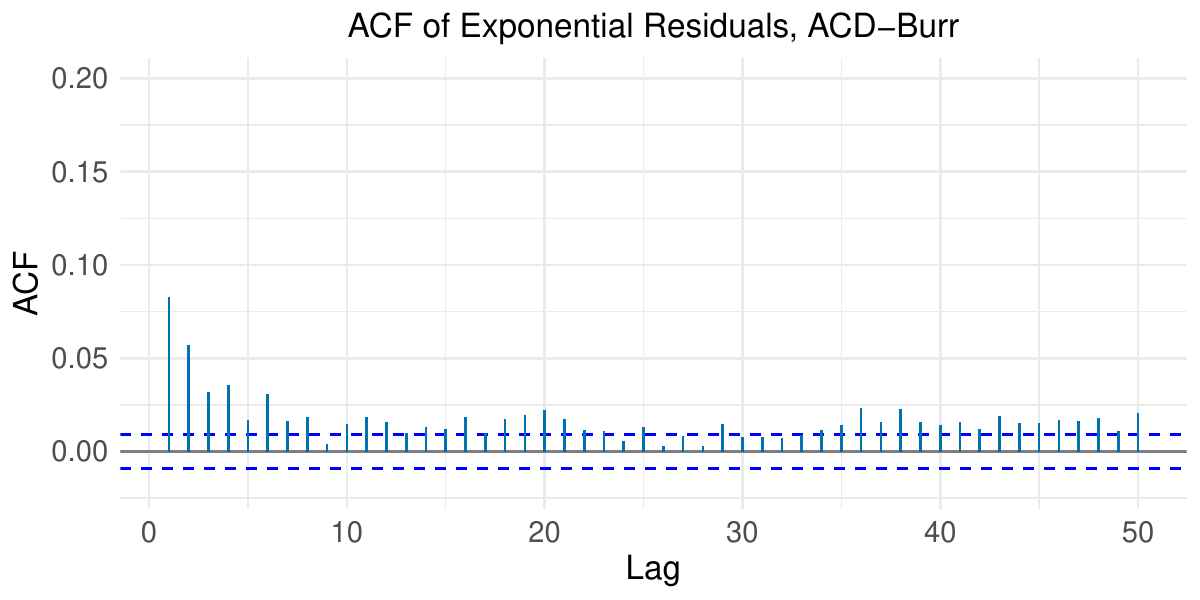}
		\caption{ACD--Burr}
	\end{subfigure}
	\hfill
	\begin{subfigure}{0.32\textwidth}
		\includegraphics[width=\textwidth]{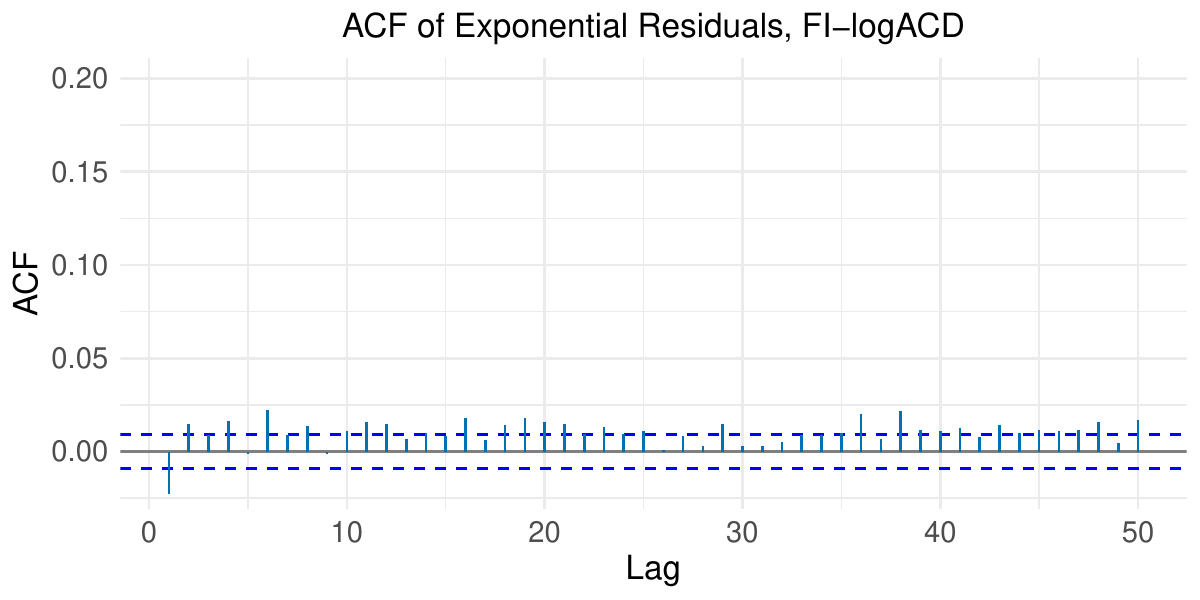}
		\caption{FI-logACD}
	\end{subfigure}
	\caption{Autocorrelation functions of exponential residuals for SE--Burr, 
		ACD--Burr, and FI-logACD models, based on AAPL mid-price data on 
		December 16, 2022.}
	\label{Fig:acf}
\end{figure}

When examining the autocorrelation function of the exponential residuals, the ACD-Burr model displays a faster decay toward 0 as the lag increases, suggesting a quicker loss of temporal dependence compared to SE-Burr. 
Notably, the FI-logACD model essentially eliminates residual autocorrelation across all lags, indicating that a long-memory specification is highly effective at capturing the serial dependence structure of trade durations.
However, this advantage in dependence modeling does not translate into superior out-of-sample forecasting performance (Table~\ref{Table:RMSE}). 
The durations are characterized by heavily right-skewed distributions with thick tails, and under such conditions, 
the choice of residual distribution plays a more decisive role in reducing prediction error than the long-memory dynamics of the conditional mean. 
This explains why the flexible Burr-based specifications, which directly tune the tail behavior of the duration distribution, achieve stronger predictive accuracy despite retaining some residual autocorrelation.

\section{Conclusion}\label{Sect:concl}

This paper applies the self-exciting flexible residual point process to predict price-change durations in the LOB. 
The proposed point process is driven by a self-exciting and exponentially decaying latent process. 
Although this latent process governs the intensity, the choice of residual distribution allows the model to capture a wide variety of distributional patterns observed in practice. 
This flexibility ensures high practical value and suggests potential applications beyond the financial domain.

The process can also be interpreted as a Markov chain. 
Under suitable conditions, this process becomes a positive Harris chain, guaranteeing stationarity. 
This property holds when the residual distribution has a finite mean, providing a solid theoretical foundation for practical applications.
This work establishes this result using the standard Foster--Lyapunov method.

This work further compares the proposed model with several alternatives to assess its predictive performance. 
An empirical study of mid-price duration forecasting found that flexible residual distributions improve predictive accuracy. 
Even compared with the ACD model family, which also accommodates flexible distributions, the proposed model achieved notable improvements in forecasting metrics. 
In addition, the structural design of this model allows for straightforward extensions, enabling a wide range of future applications.

\section*{Acknowledgement}
This work was supported by the National Research Foundation of Korea (NRF) grant funded by the Korea government (MSIT) (No. RS-2026-25469087).

\bibliography{Bib}
\bibliographystyle{chicago}

\appendix

\section{Properties of $\Phi^{-1}$}

First, we examine the mathematical properties of the inverse mapping $\Phi^{-1}(y, x)$ defined in Definition~\ref{Def:FRPP}. 
The function $\Phi(t, x)$ is strictly increasing with respect to $t$ for a fixed $x$;
thus, the inverse $\Phi^{-1}(y, x)$ is well-defined and differentiable in $y$.

\paragraph{Monotonicity in $x$:}  
The inverse function $\Phi^{-1}(y, x)$ is strictly decreasing with respect to $x$,
which can be verified using implicit differentiation:
\[
\frac{\partial}{\partial x} \Phi^{-1}(y, x)
= - \frac{ \frac{1 - \mathrm{e}^{-\beta \Phi^{-1}(y, x)}}{\beta} }{ \mu + (x - \mu + \alpha) \mathrm{e}^{-\beta \Phi^{-1}(y, x)} } < 0.
\]
As $x \to \infty$, the denominator diverges; thus, $\Phi^{-1}(y, x) \to 0$ for any fixed $y > 0$.

\paragraph{Asymptotic behavior in $y$:}  
As $y \to \infty$, the exponential term in $\Phi(t,x)$ vanishes, so that the integral behaves linearly.  
Inverting this relation yields the following approximation
\begin{equation}
	\Phi^{-1}(y, x) \sim \frac{y}{\mu} - \frac{x - \mu + \alpha}{\beta \mu}, 
	\quad \text{as } y \to \infty,
	\label{Eq:Phi_inverse_linear2}
\end{equation}
indicating that the tail behavior of $\varepsilon_n$ is linearly transferred to the interarrival time $\tau_n$.  
Moreover, the linear approximation lies strictly below $\Phi^{-1}(y,x)$, and according to Eq.~\eqref{Eq:phi_FRPP}, the deviation is given by
\[
\Phi^{-1}(y, x) - \frac{y}{\mu} + \frac{x - \mu + \alpha}{\beta \mu}
= \frac{x - \mu + \alpha}{\beta\mu}\, \e^{-\beta t}, 
\quad t = \Phi^{-1}(y,x).
\]

To control this error, fix $\delta_0 > 0$ and define
\begin{equation}
	y_0 = \frac{x - \mu + \alpha}{\beta} 
	+ \frac{\mu}{\beta}\log\!\left( \frac{x - \mu + \alpha}{\mu\beta\delta_0} \right),
	\label{Eq:y_0}
\end{equation}
along with the linear approximation
\[
t_* = \frac{y_0}{\mu} - \frac{x - \mu + \alpha}{\beta \mu}.
\]
For $y \geq y_0$, we have $t_* < t  = \Phi^{-1}(y,x)$, and it follows that
\[
\frac{x - \mu + \alpha}{\beta\mu}\,\e^{-\beta t} 
< \frac{x - \mu + \alpha}{\beta\mu}\,\e^{-\beta t_*}.
\]
Hence, for all $y \geq y_0$, we obtain
\[
\Phi^{-1}(y, x) - \frac{y}{\mu} + \frac{x - \mu + \alpha}{\beta \mu} 
< \frac{x - \mu + \alpha}{\beta\mu} 
\exp\!\left(-\frac{\beta y_0}{\mu} + \frac{x - \mu + \alpha}{\mu} \right)
= \delta_0.
\]
This result shows that the error can be made arbitrarily small by selecting a sufficiently large $y_0$.  
Moreover, by Eq.~\eqref{Eq:y_0}, $y_0$ grows with $x$ at the order $\mathcal{O}(x + \log x)$.

\paragraph{Piecewise linear upper bound.} 
We construct a piecewise linear function that bounds $\Phi^{-1}(y,x)$ from above.  
For $y \geq y_0$, the linear approximation
\[
\frac{y}{\mu} - \frac{x-\mu+\alpha}{\beta\mu} < \Phi^{-1}(y,x)
\]
holds, and the maximal deviation occurs at $y=y_0$, namely
\[
\delta_0(y_0,x) := \Phi^{-1}(y_0,x) - \frac{y_0}{\mu} + \frac{x-\mu+\alpha}{\beta\mu}.
\]
Using this deviation, we define
\begin{equation}
	U_2(y,x) := \frac{y}{\mu} - \frac{x-\mu+\alpha}{\beta\mu} + \delta_0(y_0,x), \label{Eq:U2}
\end{equation}
satisfying $U_2(y,x) \geq \Phi^{-1}(y,x)$ for all $y \geq y_0$.  

For $y \leq y_0$, we instead consider the linear function
\[
U_1(y,x) := \frac{\Phi^{-1}(y_0,x)}{y_0}\, y,
\]
which also satisfies $U_1(y,x) \geq \Phi^{-1}(y,x)$ for $0<y\leq y_0$,
since $\Phi^{-1}(y,x)$ is convex.
This convexity follows from the fact that
\[
\Phi''(t,x) = -\beta (x - \mu + \alpha) \e^{-\beta t} < 0,
\]
implying that $\Phi(\cdot,x)$ is concave in $t$.
By the standard formula for the second derivative of an inverse function,
\[
(\Phi^{-1})''(y,x) = -\frac{\Phi''(t,x)}{(\Phi'(t,x))^3} > 0,
\]
and hence $\Phi^{-1}(\cdot,x)$ is convex in $y$.

Combining these two constructions yields a piecewise linear upper bound for $\Phi^{-1}(y,x)$
\begin{equation}
	U(y,x) :=
	\begin{cases}
		U_1(y,x), & 0<y \leq y_0, \\[6pt]
		U_2(y,x), & y > y_0,
	\end{cases}
	\label{Eq:linear_bound}
\end{equation}
which holds uniformly for all $y>0$.

\section{Mathematical theory and proof for Markov chain}\label{Append:math}

\begin{proposition}
	The conditional density of $\Lambda_{n+1}$ given $\Lambda_n=x$ is
	\[
	f_{\Lambda}(y\mid x)
	= f_\varepsilon \left(\frac{x-y+\alpha}{\beta}-\frac{\mu}{\beta}\log\frac{y-\mu}{x-\mu+\alpha}\right)
	\,\frac{y}{\beta (y-\mu)},\qquad y\in(\mu,x+\alpha].
	\]
\end{proposition}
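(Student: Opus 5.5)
The plan is a direct change of variables. With $\Lambda_n=x$ fixed, $\Lambda_{n+1}$ is a strictly monotone, smooth function of the single innovation $\varepsilon_{n+1}$, which has density $f_\varepsilon$. So I will write $\varepsilon_{n+1}$ as an explicit function of $y=\Lambda_{n+1}$ and multiply $f_\varepsilon$ by the absolute Jacobian $|\D\varepsilon/\D y|$. The composition goes through the interarrival time: $\varepsilon\mapsto\tau=\Phi^{-1}(\varepsilon,x)\mapsto y=\Psi(\tau,x)$. I handle each link separately.

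First I fix the range. By the Appendix, $\Phi(\cdot,x)$ is a strictly increasing bijection from $(0,\infty)$ onto $(0,\infty)$, since $\Phi(0,x)=0$ and $\Phi(t,x)\ge\mu t\to\infty$. Next, $\tau\mapsto \mu+(x-\mu+\alpha)\e^{-\beta\tau}$ is a strictly decreasing bijection from $(0,\infty)$ onto $(\mu,x+\alpha)$; here $x-\mu+\alpha>0$ because $x\ge\mu$. Hence the support of $\Lambda_{n+1}$ is $(\mu,x+\alpha)$. The endpoint $x+\alpha$ carries zero mass, so writing the interval as closed is harmless.

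Second, I invert both maps. From Eq.~\eqref{Eq:lambda},
\[
\tau=-\frac{1}{\beta}\log\frac{y-\mu}{x-\mu+\alpha}.
\]
Substituting into Eq.~\eqref{Eq:phi_FRPP} and using $(x-\mu+\alpha)\e^{-\beta\tau}=y-\mu$ gives
\[
\varepsilon=\mu\tau+\frac{(x-\mu+\alpha)-(y-\mu)}{\beta}
=\frac{x-y+\alpha}{\beta}-\frac{\mu}{\beta}\log\frac{y-\mu}{x-\mu+\alpha}.
\]
This is exactly the argument of $f_\varepsilon$ in the statement. Equivalently, it is the linear representation \eqref{Eq:lambda_linear} solved for $\varepsilon_{n+1}$ with $\tau$ eliminated.

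Third, I compute the Jacobian by the chain rule: $\D\varepsilon/\D y=(\partial\Phi/\partial t)(\tau,x)\cdot \D\tau/\D y$. By construction $\partial\Phi/\partial t=\Psi(\tau,x)=y$, and $\D\tau/\D y=-1/(\beta(y-\mu))$. Hence $|\D\varepsilon/\D y|=y/(\beta(y-\mu))$. Differentiating the closed form directly, $-1/\beta-\mu/(\beta(y-\mu))=-y/(\beta(y-\mu))$, gives the same value and serves as a check.

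The standard one-dimensional transformation theorem then yields the stated density. I do not expect any real obstacle. The only points needing care are justifying that both maps are $C^1$ bijections with nonvanishing derivative on the open intervals, and keeping the sign of the decreasing map straight when taking the absolute value.
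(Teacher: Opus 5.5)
Your proposal is correct and matches the paper's proof step for step. The paper likewise inverts $y=\mu+(x-\mu+\alpha)\e^{-\beta\tau}$ for $\tau$, substitutes into $\Phi$ to obtain $\varepsilon(y;x)$, and applies the one-dimensional change-of-variables formula with $|\D\varepsilon/\D y|=y/(\beta(y-\mu))$. Your chain-rule check via $\partial\Phi/\partial t=\Psi=y$ and your remark that the endpoint $x+\alpha$ carries zero mass are correct small refinements.
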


\begin{proof}
	By construction we have
	\[
	\Lambda_{n+1} = \Psi(\tau_{n+1},x)
	= \mu + (x-\mu+\alpha)\e^{-\beta \tau_{n+1}},
	\]
	where $\tau_{n+1}=\Phi^{-1}(\varepsilon_{n+1},x)$ and
	\[
	\Phi(t,x)=\mu t + (x-\mu+\alpha)\frac{1-\e^{-\beta t}}{\beta}.
	\]
	Using
	\[
	y=\mu+(x-\mu+\alpha)\e^{-\beta\tau},\qquad 
	\varepsilon=\mu\tau+(x-\mu+\alpha)\frac{1-\e^{-\beta\tau}}{\beta},
	\]
	we obtain,
	\[
	\e^{-\beta\tau}=\frac{y-\mu}{x-\mu+\alpha}, \qquad
	\tau=-\frac{1}{\beta}\log\frac{y-\mu}{x-\mu+\alpha}
	\]
	and
	\[
	\varepsilon(y;x)
	=\frac{x-y+\alpha}{\beta}-\frac{\mu}{\beta}\log\frac{y-\mu}{x-\mu+\alpha}.
	\]
	Differentiating with respect to $y$ yields the following
	\[
	\frac{\mathrm{d}\varepsilon(y;x)}{\mathrm{d}y}
	=-\frac{1}{\beta}-\frac{\mu}{\beta(y-\mu)}
	=-\frac{y}{\beta (y-\mu)}.
	\]
	The following results from  applying the change-of-variable formula for densities:
	\[
	f_{\Lambda}(y\mid x)
	= f_\varepsilon\!\left(\varepsilon(y;x)\right)\,
	\left|\frac{\mathrm{d}\varepsilon(y;x)}{\mathrm{d}y}\right|,
	\]
	yielding the stated expression. 
	The support $(\mu, x+\alpha]$ follows from the fact that
	$\tau\ge0$ implies $\Lambda_{n+1}=\mu+(x-\mu+\alpha)\e^{-\beta \tau}\in(\mu,\,x+\alpha]$.
\end{proof}

\begin{proposition}\label{Prop:irreducible}
	The Markov chain $\{\Lambda_n\}$ is Lebesgue-irreducible. 
	For any $x \in \mathcal{X}$ and any measurable set $B \subseteq \mathcal{X}$ with positive Lebesgue measure, there exists an integer $n \geq 1$ such that
	\[
	P^n(x, B) = \mathbb{P}(\Lambda_n \in B \mid \Lambda_0 = x) > 0.
	\]
	Moreover, the Lebesgue measure is maximal for this chain; therefore, the chain is also $\psi$-irreducible.
\end{proposition}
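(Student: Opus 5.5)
The argument rests on the one-step conditional density of the preceding Proposition:
\[
f_{\Lambda}(y\mid x)=f_\varepsilon\bigl(\varepsilon(y;x)\bigr)\frac{y}{\beta(y-\mu)},\qquad y\in(\mu,x+\alpha].
\]
Since $f_\varepsilon>0$ on $(0,\infty)$ and $\varepsilon(y;x)>0$ for every $y\in(\mu,x+\alpha)$ (because $\tau>0$ there), this density is strictly positive on the whole interval $(\mu,x+\alpha)$. Hence $P(x,\cdot)$ is equivalent to Lebesgue measure restricted to $(\mu,x+\alpha)$. In particular, from any $x$ the chain reaches every Lebesgue-positive subset of $(\mu,x+\alpha)$ in one step.

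The difficulty is that $B$ may lie entirely above $x+\alpha$, so the reachable region has to be grown by iterating. I will show by induction that $P^n(x,\cdot)$ has a density that is strictly positive on $(\mu,x+n\alpha)$.

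\emph{Inductive step.} Suppose $P^n(x,\cdot)$ has a density $g_n>0$ a.e.\ on $(\mu,x+n\alpha)$. By Chapman--Kolmogorov,
\[
g_{n+1}(y)=\int g_n(z)\,f_\Lambda(y\mid z)\,dz .
\]
Fix $y<x+(n+1)\alpha$. The integrand is positive on the set of $z$ with $\mu<z<x+n\alpha$ and $y<z+\alpha$, that is, for $z\in(\max(\mu,y-\alpha),\,x+n\alpha)$. This interval is nonempty because $y-\alpha<x+n\alpha$. Therefore $g_{n+1}(y)>0$ for every $y\in(\mu,x+(n+1)\alpha)$.

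\emph{Conclusion for $B$.} Let $B\subseteq[\mu,\infty)$ have positive Lebesgue measure. Choose $M$ large enough that $B\cap(\mu,M)$ still has positive measure, and then $n$ with $x+n\alpha>M$. Integrating $g_n$ over $B\cap(\mu,M)$ gives $P^n(x,B)>0$. This proves Lebesgue-irreducibility, and the point $\mu$ causes no trouble because it is a null set.

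\emph{Maximality.} I will invoke the characterization in \cite{meyn2009markov}: $\psi$-irreducibility with $\psi$ equivalent to the maximal measure holds when $\psi(A)=0$ implies $\psi\{x:\sum_n P^n(x,A)>0\}=0$. Suppose $\mathrm{Leb}(A)=0$. Each $P^n(x,\cdot)$ is absolutely continuous with respect to Lebesgue measure, so $P^n(x,A)=0$ for all $x$ and all $n$. The set $\{x:\sum_n P^n(x,A)>0\}$ is therefore empty. Conversely, the maximal irreducibility measure is dominated by Lebesgue measure, since every $P^n(x,\cdot)$ is. Hence Lebesgue measure (restricted to $\mathcal X$) is maximal.

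The main obstacle is the inductive positivity step: the support constraint $y\le z+\alpha$ has to be handled carefully so that the overlap interval is nonempty right up to the upper endpoint $x+(n+1)\alpha$. Measurability of $g_{n+1}$ is routine.
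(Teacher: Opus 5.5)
Your proof is correct and follows the same idea as the paper's: the one-step kernel charges all of $(\mu,x+\alpha)$, and iterating enlarges the reachable region by (nearly) $\alpha$ per step until it covers any target. Your Chapman--Kolmogorov induction, showing that $P^n(x,\cdot)$ has a density positive on $(\mu,x+n\alpha)$, is a rigorous version of the paper's sample-path sketch (steps with $\e^{-\beta\tau_k}$ close to $1$, then one decay step into $[a,b]$); it also covers arbitrary Lebesgue-positive $B$ rather than just intervals, and it justifies maximality via $P^n(x,\cdot)\ll\mathrm{Leb}$, which the paper only asserts.
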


\begin{proof}
	First, the chain can reach any interval in $(\mu, x + \alpha)$ in one step with a positive probability, 
	since $\e^{-\beta \tau}$ has a continuous density supported on $(0, 1)$, regardless of the current state $x$.
	Although the chain cannot reach the point $\{\mu\}$ (a measure-zero set), this does not affect irreducibility.
	
	Next, we show that the chain can reach any interval $[a, b] \subseteq \mathcal{X}$ with positive measure after finitely many steps.
	If \(b > x + \alpha\), the process can gradually increase its state over multiple steps. 
	For example, if \(\e^{-\beta \tau_k} \in (1 - \delta, 1)\) for small $\delta > 0$ and for \(k = 1, \dots, n-1\), then $\Lambda_{n-1}$ can grow sufficiently large with positive probability.
	Given such a large $\Lambda_{n-1}$, the process can move to a target interval $[a, b]$ in one step via appropriate decay, again with positive probability.
	
	Therefore, for any starting point $x$ and any measurable set \(B\) with positive Lebesgue measure, the $n$-step transition probability satisfies \(P^n(x, B) > 0\) for some $n \geq 1$, establishing irreducibility.
\end{proof}

\begin{proposition}\label{Prop:aperiodic}
	The Markov chain $\{\Lambda_n\}$ is aperiodic. 
	The state space \(\mathcal{X}\) cannot be decomposed into disjoint subsets \(D_1, D_2, \dots, D_d\) for some \(d \geq 2\) such that
	\begin{equation}
		P(\Lambda, D_{i+1}) = 1 \quad \forall \Lambda \in D_i, \quad i = 1, 2, \dots, d \ (\text{mod } d). \label{Eq:disjoint}
	\end{equation}
\end{proposition}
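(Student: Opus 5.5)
The plan is to argue by contradiction, using the explicit one-step density from the preceding proposition together with the Lebesgue-irreducibility established in Proposition~\ref{Prop:irreducible}. Suppose a $d$-cycle $D_1,\dots,D_d$ with $d\ge 2$ exists and satisfies \eqref{Eq:disjoint}. By the standard cyclic decomposition for $\psi$-irreducible chains in \cite{meyn2009markov}, the union $\bigcup_i D_i$ has full $\psi$-measure, and since Lebesgue measure is a maximal irreducibility measure, each $D_i$ must have positive Lebesgue measure. Showing each $D_i$ is non-null is a small step of its own: if $\mathrm{Leb}(D_i)=0$, then $P(\Lambda,D_i)=0$ for every $\Lambda$, because the kernel is absolutely continuous. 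This contradicts $P(\Lambda,D_i)=1$ for $\Lambda\in D_{i-1}$, provided $D_{i-1}$ is nonempty.

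The key observation is a \emph{small-set overlap}. The conditional density $f_\Lambda(y\mid x)$ is strictly positive on the whole interval $(\mu,x+\alpha)$, because $f_\varepsilon>0$ on $(0,\infty)$ and the map $y\mapsto\varepsilon(y;x)$ sends $(\mu,x+\alpha)$ bijectively onto $(0,\infty)$. Hence for every starting state $x\ge\mu$, the law $P(x,\cdot)$ is equivalent to Lebesgue measure on $(\mu,x+\alpha)$, and in particular on the fixed interval $I:=(\mu,\mu+\alpha)$. Now pick $x_1\in D_1$ and $x_2\in D_2$. By \eqref{Eq:disjoint}, $P(x_1,D_2)=1$, so $D_2^c$ is $P(x_1,\cdot)$-null, and therefore $\mathrm{Leb}(I\setminus D_2)=0$. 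Similarly, $P(x_2,D_3)=1$ gives $\mathrm{Leb}(I\setminus D_3)=0$, where $D_3=D_1$ if $d=2$. So $D_2\cap D_3\cap I$ has Lebesgue measure $\alpha>0$. This contradicts disjointness because $D_2\neq D_3$ when $d\ge2$.

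An alternative route, which I would mention as a check, is to exhibit $I$ (or any compact subinterval $C$ of it) as a small set. Indeed $\inf_{x\in C}f_\Lambda(y\mid x)>0$ for $y\in C$, since the density is continuous and positive, assuming $f_\varepsilon$ is bounded away from zero on compacts; lower semicontinuity would suffice. Then $P(x,C)>0$ for all $x\in C$, and the period of a $\nu_1$-small set is $1$, which implies aperiodicity by \cite{meyn2009markov}. I will prefer the direct contradiction argument, since it only needs $f_\varepsilon>0$ and not continuity.

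The main obstacle is the measure-theoretic bookkeeping. The sets $D_i$ are only required to satisfy \eqref{Eq:disjoint} pointwise on $D_i$, and the definition in \cite{meyn2009markov} allows an exceptional $\psi$-null set. So I must make sure the $x_1,x_2$ chosen above exist, that is, that $D_1$ and $D_2$ are nonempty. The nonnull argument above handles this. I would also state explicitly that equivalence of $P(x,\cdot)$ with Lebesgue measure on $(\mu,x+\alpha)$ uses $f_\varepsilon>0$ almost everywhere on $(0,\infty)$, as assumed in \eqref{Eq:P}.
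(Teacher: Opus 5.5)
Your proof is correct, but it reaches the contradiction differently from the paper.

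\textbf{The paper's route.} It argues locally. The one-step support $(\mu,x+\alpha)$ contains a neighbourhood of the current state $x$, so the chain can return near $x$ in one step, and the paper asserts that this rules out a cyclic partition.

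\textbf{Your route.} You argue through a common overlap. Every $P(x,\cdot)$, $x\ge\mu$, is equivalent to Lebesgue measure on the fixed interval $I=(\mu,\mu+\alpha)$, because $y\mapsto\varepsilon(y;x)$ is a decreasing diffeomorphism of $(\mu,x+\alpha)$ onto $(0,\infty)$ and $f_\varepsilon>0$. Hence $D_2$ and $D_3$ (indices mod $d$) both have full measure in $I$, which contradicts disjointness.

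\textbf{What your route buys.} You get an explicit contradiction. Positive probability of returning to every neighbourhood of $x$ does not by itself exclude a $d$-cycle, because the $D_i$ need not be intervals. For example, take a Lebesgue-irreducible chain on $[0,1]$ that jumps uniformly from $A$ to $A^c$ and from $A^c$ to $A$, where both sets meet every interval in positive measure. It has this return property, yet its period is $2$. The step that actually does the work is the full-measure property you use; the paper leaves it implicit behind ``the overlapping nature of the transition kernel's support''.

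Your version also has smaller needs and fewer edge cases:
\begin{itemize}
\item it needs only $f_\varepsilon>0$ a.e.;
\item it does not use irreducibility for the contradiction itself;
\item it works at the boundary state $x=\mu$, where the paper's neighbourhood $(x-\delta,x+\delta)$ is not contained in $(\mu,x+\alpha)$.
\end{itemize}
The paper's argument is shorter and conveys the intuition. Your small-set alternative, i.e.\ strong aperiodicity in Meyn and Tweedie, is also valid under continuity of $f_\varepsilon$.

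\textbf{One bookkeeping fix.} The phrase ``provided $D_{i-1}$ is nonempty'' is circular as written. Get nonemptiness by propagation instead. Since $\bigcup_i D_i\neq\emptyset$, some $D_j\neq\emptyset$, and $P(x,D_{j+1})=1$ for $x\in D_j$ forces $D_{j+1}\neq\emptyset$; continue around the cycle. With that, the positive-Lebesgue-measure step is not needed for the main argument.
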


\begin{proof}
	Let \(p(x' \mid x)\) denote the transition density, so that for any \(x \in [\mu, \infty)\) and any measurable set \(B \subseteq (\mu, \infty)\),
	\[
	P(x, B) = \int_B p(x' \mid x) \, \D x'.
	\]
	This density has support on the interval \((\mu, x + \alpha)\), which includes an open neighborhood of \(x\). 
	In particular, for any state \(x\), there exists \(\delta > 0\) such that the interval \((x - \delta, x + \delta) \subseteq (\mu, x + \alpha)\) has positive transition probability.
	
	The overlapping nature of the transition kernel's support implies that the chain can return to any small neighborhood of its current state in one step with positive probability. 
	This rules out the existence of a nontrivial cyclic partition of the state space as in \eqref{Eq:disjoint}, and hence the chain is aperiodic.
\end{proof}

\begin{definition}[T-chain]
	Let $a$ be a sampling distribution over the non-negative integers $\mathbb{Z}_+$
	and there exists a substochastic transition kernel $T$ such that
	$$
	K_a(x, B) := \sum_{k=0}^{\infty} a(k) P^k(x, B) \geq T(x, B), \quad x \in \mathcal X, B \in \mathcal B(\mathcal X)
	$$
	where $T(\cdot, B)$ is a lower semicontinuous for any $B \in \mathcal B(\mathcal X)$.
	Then $T$ is called a continuous component of $K_a$.
	If $\{X_n\}$ is a Markov chain for which there exists a sampling distribution $a$ such that $K_a$ possesses a continuous component $T$ with $T(x, \mathcal X) > 0$ for all $x$, then $\{X_n\}$ is called a T-chain. 
\end{definition}

\begin{proposition}
	The Markov chain $\{\Lambda_n\}$ with residual density  $f_\varepsilon$, which is continuous and strictly positive on $(0,\infty)$, is a T-chain.
\end{proposition}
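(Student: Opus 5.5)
The plan is to take the sampling distribution $a=\delta_1$, so that $K_a=P$, and to use as continuous component the absolutely continuous part of $P$ itself:
\[
T(x,B) := P(x,B) = \int_B f_\Lambda(y\mid x)\,\D y ,
\]
with $f_\Lambda$ the transition density from the preceding proposition. The inequality $K_a \ge T$ then holds trivially with equality. Moreover $T(x,\mathcal X)=1>0$ for every $x$, because $\Lambda_1$ always lies in $(\mu,x+\alpha]\subset\mathcal X$. The whole proof therefore reduces to showing that $x\mapsto P(x,B)$ is lower semicontinuous on $\mathcal X=[\mu,\infty)$ for every Borel set $B$.

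For this we fix $B$ and a sequence $x_k\to x$ in $\mathcal X$, and write
\[
P(x_k,B)=\int_\mu^\infty \mathbf 1_B(y)\,g(y,x_k)\,\D y,
\qquad
g(y,x):=f_\varepsilon\big(\varepsilon(y;x)\big)\,\frac{y}{\beta(y-\mu)}\,\mathbf 1\{\mu<y<x+\alpha\}.
\]
On the open set $\{\mu<y<x+\alpha\}$ the map $x\mapsto \varepsilon(y;x)=\frac{x-y+\alpha}{\beta}-\frac{\mu}{\beta}\log\frac{y-\mu}{x-\mu+\alpha}$ is continuous. Since $f_\varepsilon$ is continuous, $g(y,x_k)\to g(y,x)$ for every $y\neq x+\alpha$, and in particular for Lebesgue-a.e.\ $y$. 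At points with $y\ge x+\alpha$ we have $g(y,x)=0\le g(y,x_k)$, so in all cases
\[
\liminf_k g(y,x_k)\ge g(y,x)\quad\text{a.e.}
\]
Fatou's lemma then gives $\liminf_k P(x_k,B)\ge P(x,B)$, which is exactly lower semicontinuity. Continuity of $f_\varepsilon$ is used here; strict positivity is not even needed for the T-chain property. Positivity is what, combined with Proposition~\ref{Prop:irreducible}, makes the T-chain useful later: it ensures that compact sets are petite, as in \cite{meyn2009markov}.

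No domination argument is required, because Fatou handles everything. The only delicate points are the moving upper endpoint $x+\alpha$ of the support and the boundary state $x=\mu$. At $x=\mu$ the formula remains valid, with $x-\mu+\alpha=\alpha>0$, so the logarithm is well defined and the same argument applies. If one wants full continuity rather than lower semicontinuity, Scheffé's lemma gives it, since each $g(\cdot,x_k)$ is a probability density converging a.e.\ to the probability density $g(\cdot,x)$; this would show that $\{\Lambda_n\}$ is even strong Feller. I expect the main obstacle to be precisely this handling of the a.e.\ convergence near the moving endpoint. It is resolved by noting that this endpoint is a single point, hence a null set.
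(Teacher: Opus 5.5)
Your proof is correct and follows essentially the paper's route: $a=\delta_1$, a continuous component built from the one-step transition density $f_\Lambda(y\mid x)$, and Fatou's lemma for lower semicontinuity. The only difference is the choice of $T$. The paper restricts $T$ to $B_0=(\mu,\mu+\alpha]$, which lies inside every support $(\mu,x+\alpha]$, so the moving endpoint never arises, but it then needs strict positivity of $f_\varepsilon$ to get $T(x,\mathcal X)>0$. Your choice $T=P$ makes $T(x,\mathcal X)=1$ automatic, treats the endpoint as a null set, and in fact gives the strong Feller property. That continuity already follows from your lower semicontinuity applied to both $B$ and $\mathcal X\setminus B$, so Scheff\'e's lemma is not needed.
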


\begin{proof}
	Take the sampling distribution $a$ with $a(1)=1$, so that $K_a = P$. 
	Let $B_0 = (\mu,\mu+\alpha] \subseteq (\mu, x+\alpha]$ for all $x\in\mathcal X=[\mu,\infty)$. 
	Define for $B \subseteq B_0$,
	\[
	T(x,B) := \int_B f_{\Lambda}(y \mid x) \, \mathrm{d}y.
	\]
	Since $f_\varepsilon>0$, the conditional density $f_{\Lambda}(y\mid x)$ is strictly positive on $(\mu,x+\alpha]$, hence $T(x,B_0)>0$ for all $x$. 
	
	By continuity of the conditional density in $x$, we have 
	$f_{\Lambda}(y\mid x_n) \to f_{\Lambda}(y\mid x)$ for each $y\in B$ as $x_n \to x$. 
	Hence, by Fatou’s lemma,
	\[
	\liminf_{n\to\infty} T(x_n,B) 
	= \liminf_{n\to\infty} \int_B f_{\Lambda}(y\mid x_n)\,\mathrm{d}y
	\;\ge\; \int_B f_{\Lambda}(y\mid x)\,\mathrm{d}y = T(x,B),
	\]
	so $x\mapsto T(x,B)$ is lower semicontinuous for every Borel set $B \subseteq B_0$.
	Thus, $T$ is a continuous component of $K_a$ with $T(x,\mathcal X)>0$, and the chain $\{\Lambda_n\}$ is a T-chain.
\end{proof}

In the theory of Markov chains, petite sets play a central role in analyzing stability and recurrence. 
They act as small sets that allow one to control the long-run behavior of the chain through drift and minorization conditions.

\begin{definition}[Petite set]
	Consider a Markov chain with transition kernel $P$ on a general state-space $\mathcal X$.
	A set $A \subset \mathcal{X}$ is said to be $\nu_a$-petite (with respect to $P$), 
	if there exists a non-trivial measure $\nu$ on $\mathcal{X}$ and an integer-valued distribution $a(\cdot)$ on $\mathbb{Z}_+$ such that, for all $x \in A$ and measurable sets $B \subset \mathcal{X}$,
	\[
	\sum_{n=1}^\infty a(n) P^n(x, B) \geq \nu(B).
	\] 
\end{definition}

For a $\psi$-irreducible Markov chain, any $\nu_a$-petite set $A$ is also $\psi_b$-petite for some integer-valued distribution $b(\cdot)$ on $\mathbb{Z}_+$. 
Henceforth, we simply refer to a petite set as one relative to the maximal irreducible measure $\psi$.
In the present work, $\{ \Lambda_n \}$ is $\psi$-irreducible with respect to Lebesgue measure,
so a petite set for $\{ \Lambda_n \}$ is understood relative to Lebesgue measure.

Moreover, for a $\psi$-irreducible T-chain, 
every compact subset of the state space is petite. 

\begin{definition}[Harris recurrent]
	A $\psi$-irreducible Markov chain $X$ is Harris recurrent if, for every measurable set $A \subset \mathcal{X}$ with positive $\psi$-measure, the chain visits $A$ infinitely often, i.e.,
	\[
	\mathbb{P}(\eta_A = \infty \,|\, X_0 = x) = 1, \quad \text{for all } x \in A,
	\]
	where $\eta_A$ denotes the number of visits to the set $A$.
	This concept originates from \cite{harris1956existence}.
\end{definition}

Harris recurrence of $X$ implies that the chain is recurrent, which can be expressed in terms of expected occupation time: 
for any measurable sets $A \subset \mathcal{X}$ with positive measure,
\[
\mathbb{E}[\eta_A \,|\, X_0 = x] = \infty, \quad \text{for all } x \in A.
\]
Equivalently, a $\psi$-irreducible chain $X$ is Harris recurrent if and only if there exists a petite set $C \in \mathcal{B}(\mathcal{X})$ such that
\[
\mathbb{P}(\tau_C < \infty \,|\, X_0 = x) = 1, \quad \text{for all } x \in \mathcal{X},
\]
where $\tau_C$ denotes the first hitting time to $C$.
Furthermore, if for this petite set $C$
\[
\sup_{x \in C} \mathbb{E}[\tau_C \,|\, X_0 = x] < \infty,
\]
then the chain is positive Harris recurrent.
More precisely, the chain admits a unique invariant probability measure 
$\pi$ on $(\mathcal{X}, \mathcal{B}(\mathcal{X}))$, i.e.,
\[
\pi(B) = \int_{\mathcal{X}} P(x,B) \, \pi(dx), 
\quad \forall B \in \mathcal{B}(\mathcal{X}).
\]
Positive Harris recurrence is therefore a central concept in analyzing the stability and long-term behavior of Markov chains, as it ensures both recurrence and the existence of a stationary distribution.

The following theorem, originally introduced by \cite{foster1953stochastic} for discrete-time Markov chains, provides conditions for recurrence and stability.

\begin{theorem}[Foster--Lyapunov criterion]\label{Thm:FL}
	Suppose \( \{X_n\}_{n \in \mathbb{N}}\) is a $\psi$-irreducible Markov chain on a general state-space \(\mathcal{X}\), and that $X$ satisfies a drift condition toward a petite set. 
	More precisely, if there exists a measurable function \(V: \mathcal X \to [0, \infty)\), a petite set \(C \subset \mathcal{X}\), and finite constants \(\delta > 0\) and \(b < \infty\) such that
	\begin{align}
		\mathbb{E}[V(X_{n+1}) \mid X_{n} = x ] - V(x) \leq - \delta + b \mathbbm{1}_{C}(x), \label{Eq:Foster} 
	\end{align}
	then the Markov chain is positive Harris recurrent.
\end{theorem}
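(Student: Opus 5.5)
The plan is to reduce the statement to the two hitting-time criteria recalled just before Theorem~\ref{Thm:FL}. The first is that $\mathbb{P}(\tau_C<\infty\mid X_0=x)=1$ for all $x$, for a petite set $C$, which gives Harris recurrence. The second is that $\sup_{x\in C}\mathbb{E}[\tau_C\mid X_0=x]<\infty$, which upgrades this to positive Harris recurrence. Both will come from the drift inequality \eqref{Eq:Foster} through a supermartingale (Dynkin-type) argument.

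\emph{Step 1: mean hitting time from outside $C$.} Fix $x$ and write $\sigma_C=\inf\{n\ge 1: X_n\in C\}$. Since $X_k\notin C$ for $1\le k<\sigma_C$, the drift inequality \eqref{Eq:Foster} shows that
\[
M_n:=V(X_{n\wedge\sigma_C})+\delta\,\bigl(n\wedge\sigma_C\bigr)
\]
is a nonnegative supermartingale for $n\ge 1$. The first step, from $X_0=x$, is handled separately because $x$ may lie in $C$; there we use $\mathbb{E}[V(X_1)\mid X_0=x]\le V(x)-\delta+b\mathbbm 1_C(x)$.

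Optional stopping at the bounded times $n\wedge\sigma_C$, together with $V\ge0$, gives
\[
\delta\,\mathbb{E}_x[n\wedge\sigma_C]\le V(x)+b\mathbbm 1_C(x).
\]
Letting $n\to\infty$ by monotone convergence yields $\mathbb{E}_x[\sigma_C]\le \bigl(V(x)+b\mathbbm 1_C(x)\bigr)/\delta<\infty$. In particular $\mathbb{P}_x(\sigma_C<\infty)=1$ for every $x$, since $V$ is finite-valued. Because $C$ is petite and the chain is $\psi$-irreducible, the stated equivalence then gives Harris recurrence.

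\emph{Step 2: uniform bound on $C$.} Step~1 gives $\sup_{x\in C}\mathbb{E}_x[\sigma_C]\le(\sup_C V+b)/\delta$, which settles positivity whenever $V$ is bounded on $C$. This is the expected main obstacle, because the theorem as stated does not assume $\sup_C V<\infty$.

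To handle it, I would first replace $C$ by $C_N:=C\cap\{V\le N\}$. This set is still petite, since subsets of petite sets are petite. For $N$ large it is nonempty and has positive $\psi$-measure, because $\bigcup_N C_N=C$, and $C$ may be taken with $\psi(C)>0$; this holds because summing the drift under an invariant-type argument forces the chain to enter $C$.

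Next, I would check that the drift condition still holds relative to $C_N$ after enlarging $b$. Outside $C$ nothing changes. On $C\setminus C_N$ we have $V>N$, but the inequality $PV\le V-\delta+b$ still holds there, so I would use it directly as a one-step bound: repeating the stopping argument with $\sigma_{C_N}$ gives $\mathbb{E}_x[\sigma_{C_N}]\le c\,(V(x)+1)$. If this route fails, the fallback is the regularity theory of \cite{meyn2009markov} (Theorem~11.3.4 there). That result shows the drift toward a petite set implies that every sublevel set $\{V\le N\}$ is regular, i.e.\ $\sup_{x\in\{V\le N\}}\mathbb{E}_x[\sigma_A]<\infty$ for all $A$ with $\psi(A)>0$. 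This is exactly the finite-mean-return condition on a petite set.

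\emph{Step 3: conclusion.} We then have a petite set reached almost surely from every starting point, with uniformly bounded expected return time over it. By the Kac/Nummelin splitting construction underlying the equivalence quoted earlier, this yields a unique invariant probability measure $\pi$, and hence positive Harris recurrence.

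For the application in Theorem~\ref{Thm:Main}, I would take $C$ compact, so that it is petite by the T-chain property, and choose $V$ continuous. Then $V$ is automatically bounded on $C$, and the obstacle in Step~2 does not arise.
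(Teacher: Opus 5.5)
Your Step 1 is correct. It is the standard Dynkin (comparison) argument and gives $\E_x[\sigma_C]\le (V(x)+b\mathbbm 1_C(x))/\delta$, hence Harris recurrence via the quoted petite-set criterion. It also gives positivity whenever $\sup_C V<\infty$, which covers the paper's application ($V(x)=x$, $C=C_K$). The paper itself gives no proof of Theorem~\ref{Thm:FL}; it quotes it from Foster and Meyn--Tweedie, so a citation is consistent with the paper.

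However, the self-contained route you propose in Step 2 fails. The drift condition does not transfer to $C_N=C\cap\{V\le N\}$ by enlarging $b$. For $x\in C\setminus C_N$ you only know $PV(x)\le V(x)+b-\delta$, which may be a positive drift, and the $b$-term is no longer allowed there because $x\notin C_N$. So $V(X_{n\wedge\sigma_{C_N}})+\delta(n\wedge\sigma_{C_N})$ is not a supermartingale, and stopping yields only
\[
\delta\,\E_x[\sigma_{C_N}]\le V(x)+b\,\E_x\Bigl[\sum_{k=0}^{\sigma_{C_N}-1}\mathbbm 1_C(X_k)\Bigr].
\]
Nothing in this bounds how often the chain passes through $C\setminus C_N$ before it hits $C_N$. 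This is exactly where petiteness of $C$ itself, not of $C_N$, must be used quantitatively. Falling back on Meyn--Tweedie Theorem 11.3.4 therefore cites precisely the nontrivial part of the statement rather than proving it.

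A short repair is available. Let $\sum_n a(n)P^n(x,\cdot)\ge\nu$ for $x\in C$, and set $a_\epsilon(n)=(1-\epsilon)\epsilon^n$.

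First, $\psi(C)>0$ should come from Step 1, not from an invariant-measure heuristic. For a maximal irreducibility measure, $\psi(C)=0$ would make the set of states from which $C$ is reached with positive probability $\psi$-null, contradicting $\mathbb P_x(\sigma_C<\infty)=1$ for all $x$. Hence $\psi(C_N)>0$ for large $N$, and by irreducibility $K_{a_\epsilon}(y,C_N)>0$ for every $y$.

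With the convolution $a*a_\epsilon$, so that $K_{a*a_\epsilon}=K_aK_{a_\epsilon}$, this gives $\inf_{x\in C}K_{a*a_\epsilon}(x,C_N)\ge\int\nu(\D y)K_{a_\epsilon}(y,C_N)=:2\eta>0$. Truncate $a*a_\epsilon$ at a large $M$; the discarded tail is bounded uniformly in $x$. After renormalizing you get $a'$ supported on $\{0,\dots,M\}$ with $K_{a'}(\cdot,C_N)\ge\eta$ on $C$.

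Then $\mathbbm 1_C\le\eta^{-1}K_{a'}(\cdot,C_N)$, so $PV\le V-\delta+b\eta^{-1}K_{a'}(\cdot,C_N)$ on all of $\mathcal X$. The Dynkin argument stopped at $\sigma=\sigma_{C_N}$, combined with the Markov property, gives
\[
\delta\,\E_x[\sigma]\le V(x)+\frac{b}{\eta}\sum_{i=0}^{M}a'(i)\,\E_x\Bigl[\sum_{k=0}^{\sigma-1}\mathbbm 1_{C_N}(X_{k+i})\Bigr]\le V(x)+\frac{b(M+1)}{\eta}.
\]
The last inequality holds because $C_N$ is not visited at times $1,\dots,\sigma-1$. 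Hence $X_{k+i}\in C_N$ with $k<\sigma$ forces $k+i=0$ or $\sigma\le k+i\le\sigma+i-1$, which is at most $\max(i,1)$ terms.

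Thus $\sup_{x\in C_N}\E_x[\sigma_{C_N}]\le (N+b(M+1)/\eta)/\delta<\infty$. Your Step 3 then applies to the petite set $C_N$, which has positive $\psi$-measure.
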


To establish positive Harris recurrence, we consider the Lyapunov function $V(x) = x$ 
and define the compact set 
\[
C_K = \{ x \in \mathcal{X} : \mu \leq x \leq K \}.
\] 
Since the chain is a T-chain, $C_K$ is petite.

\begin{lemma}\label{lemma:pHr}
	If there exists a constant \( K \) such that for all \( x > K \),
	\begin{equation}
		\mathbb{E}[\tau_n \mid \Lambda_{n-1} = x] < \frac{\beta \mu_\varepsilon - \alpha}{\beta \mu}, \label{Eq:tau_condition}
	\end{equation}
	then the Markov chain \( \{ \Lambda_n \} \) is positive Harris recurrent.
\end{lemma}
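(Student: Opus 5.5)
The plan is to apply the Foster--Lyapunov criterion (Theorem~\ref{Thm:FL}) with the Lyapunov function $V(x)=x$ and the petite set $C_K=[\mu,K]$, which is petite because the chain is a $\psi$-irreducible T-chain (Proposition~\ref{Prop:irreducible} and the T-chain proposition above). The key tool is the linear representation \eqref{Eq:lambda_linear}, $\Lambda_n=\Lambda_{n-1}+\alpha-\beta\varepsilon_n+\beta\mu\tau_n$. Taking the conditional expectation given $\Lambda_{n-1}=x$, and using that $\varepsilon_n$ is independent of $\Lambda_{n-1}$ with mean $\mu_\varepsilon$, gives the exact one-step drift
\[
\E[V(\Lambda_n)\mid \Lambda_{n-1}=x]-V(x)=\alpha-\beta\mu_\varepsilon+\beta\mu\,\E[\tau_n\mid\Lambda_{n-1}=x].
\]
This identity makes the role of condition \eqref{Eq:tau_condition} transparent: the drift is negative exactly when $\E[\tau_n\mid\Lambda_{n-1}=x]<(\beta\mu_\varepsilon-\alpha)/(\beta\mu)$.

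Next I would obtain a \emph{uniform} negative drift outside $C_K$. The hypothesis gives only strict inequality pointwise, but the paper's appendix shows that $\Phi^{-1}(y,x)$ is decreasing in $x$, so $m(x):=\E[\tau_n\mid\Lambda_{n-1}=x]=\E[\Phi^{-1}(\varepsilon,x)]$ is nonincreasing in $x$. Choosing any $K'>K$, for $x\ge K'$ we get $m(x)\le m(K')<(\beta\mu_\varepsilon-\alpha)/(\beta\mu)$. Setting $\delta:=\beta\mu_\varepsilon-\alpha-\beta\mu\,m(K')>0$, the drift is at most $-\delta$ for all $x>K'$. We then replace $K$ by $K'$, so the set is $C_{K'}$.

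Inside $C_{K'}$, the drift is bounded above by $\alpha+\beta\mu\,m(x)\le \alpha+\beta\mu\,m(\mu)$. It remains to check that $m(\mu)<\infty$. Since $\Phi(t,x)\ge\mu t$, we have $\Phi^{-1}(y,x)\le y/\mu$, so $m(x)\le\mu_\varepsilon/\mu<\infty$ for every $x$; alternatively, the piecewise linear bound $U(y,x)$ gives the same conclusion. Taking $b:=\alpha+\beta\mu_\varepsilon+\delta$ then ensures \eqref{Eq:Foster} holds on $C_{K'}$ as well. $V$ is finite, nonnegative and measurable, so Theorem~\ref{Thm:FL} yields positive Harris recurrence.

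The main obstacle is upgrading the pointwise strict inequality to a uniform gap $\delta$. Monotonicity of $m$ handles this cleanly. Without it, one would instead use dominated convergence (with the domination $\Phi^{-1}(\varepsilon,x)\le\varepsilon/\mu$) to show $m(x)\to0$ as $x\to\infty$, which gives a uniform gap for large $x$ directly. Because $\alpha<\beta\mu_\varepsilon$, this limit argument in fact shows the hypothesis of the lemma always holds.
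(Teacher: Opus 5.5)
Your proof is correct and follows the same route as the paper: $V(x)=x$, the compact (hence petite) set $C_K$, the exact drift $\alpha-\beta\mu_\varepsilon+\beta\mu\,\E[\tau_n\mid\Lambda_{n-1}=x]$ obtained from \eqref{Eq:lambda_linear}, and the Foster--Lyapunov criterion. Your use of the monotonicity of $m(x)$ supplies the uniform gap $\delta$ that the paper only asserts, since a pointwise strict inequality on $x>K$ does not by itself give one; on the petite set, the paper uses the simpler bound $\Lambda_n\le x+\alpha$ where you use $\Phi^{-1}(y,x)\le y/\mu$.
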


\begin{proof}
	From Eq.~\eqref{Eq:lambda_linear}, we have
	\[
	\mathbb{E}[\Lambda_n \mid \Lambda_{n-1} = x]
	= x + \alpha - \beta \mu_\varepsilon + \beta \mu \, \mathbb{E}[\tau_n \mid \Lambda_{n-1} = x].
	\]
	Under condition~\eqref{Eq:tau_condition} and since $\alpha < \beta \mu_{\varepsilon}$, there exists \( \delta > 0 \) such that
	\begin{equation}
		\mathbb{E}[\Lambda_n \mid \Lambda_{n-1} = x] \leq x - \delta, \label{Eq:nd}
	\end{equation}
	for all \( x > K \). 
	This establishes the negative drift condition required by Eq.~\eqref{Eq:Foster} outside the petite set.
	
	For \( x \leq K \), the drift may be positive or negative, but the state remains within the petite set \( C_K \), ensuring boundedness. 
	Moreover, since \( \tau_n \geq 0 \), we have \( 0 < \e^{-\beta \tau_n} \leq 1 \). From Eq.~\eqref{Eq:lambda}, it follows that
	\[
	\mathbb{E}[\Lambda_n \mid \Lambda_{n-1} = x] \leq K + \alpha,
	\]
	which ensures boundedness within  \( C_K \).
	
	Therefore, both conditions of the Foster--Lyapunov criterion are satisfied, and the chain \(\{\Lambda_n\}\) is positive Harris recurrent.
\end{proof}

\begin{theorem}\label{Thm:positive_recurrent}
	The Markov chain $\{\Lambda_n \}$ is positive Harris recurrent.	
\end{theorem}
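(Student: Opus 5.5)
The plan is to verify the hypothesis of Lemma~\ref{lemma:pHr}: we need a constant $K$ with
\[
\mathbb{E}[\tau_n \mid \Lambda_{n-1}=x] < \frac{\beta\mu_\varepsilon-\alpha}{\beta\mu}
\]
for all $x>K$. The right-hand side is a fixed positive number because $\alpha<\beta\mu_\varepsilon$. It therefore suffices to show that $\mathbb{E}[\tau_n\mid\Lambda_{n-1}=x]=\E[\Phi^{-1}(\varepsilon,x)]\to 0$ as $x\to\infty$. Once this is done, Lemma~\ref{lemma:pHr} gives positive Harris recurrence. That lemma already combines $\psi$-irreducibility (Proposition~\ref{Prop:irreducible}), the T-chain property (so $C_K$ is petite), and the Foster--Lyapunov criterion (Theorem~\ref{Thm:FL}) with $V(x)=x$.

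To obtain the limit, I would use dominated convergence on $\Phi^{-1}(\varepsilon,x)$.

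\textbf{Pointwise convergence.} The monotonicity analysis in the appendix on $\Phi^{-1}$ gives $\Phi^{-1}(y,x)\to 0$ as $x\to\infty$ for each fixed $y>0$.

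\textbf{Domination.} For every $x\ge\mu$ we have $\Phi(t,x)\ge \mu t$, since the second term $(x-\mu+\alpha)(1-\e^{-\beta t})/\beta$ is nonnegative. Hence $\Phi^{-1}(y,x)\le y/\mu$. The dominating function $\varepsilon/\mu$ is integrable because $\mu_\varepsilon<\infty$.

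Dominated convergence then yields $\E[\Phi^{-1}(\varepsilon,x)]\to 0$. Monotonicity of $\Phi^{-1}$ in $x$ makes this expectation nonincreasing in $x$, so we may take $K$ to be any point beyond which it falls below $(\beta\mu_\varepsilon-\alpha)/(2\beta\mu)$. This bypasses the more delicate piecewise linear bound $U(y,x)$. That bound could serve as an alternative route, but the crude bound $y/\mu$ is uniform in $x$, which is exactly what dominated convergence requires.

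The only step needing care is confirming that Lemma~\ref{lemma:pHr} supplies a genuine Foster--Lyapunov drift with uniform $\delta$ and bounded $b$ on $C_K$.

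\textbf{Uniform $\delta$.} Given the bound $(\beta\mu_\varepsilon-\alpha)/(2\beta\mu)$ on the conditional mean for $x>K$, the identity $\E[\Lambda_n\mid\Lambda_{n-1}=x]=x+\alpha-\beta\mu_\varepsilon+\beta\mu\,\E[\tau_n\mid\Lambda_{n-1}=x]$ gives drift at most $-(\beta\mu_\varepsilon-\alpha)/2=: -\delta$.

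\textbf{Bounded $b$ on $C_K$.} For $x\le K$ the drift is at most $\alpha$, since $\Lambda_n\le x+\alpha$. So $b=\alpha+\delta$ works.

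I expect the main obstacle to be the measure-theoretic bookkeeping, namely that compact sets $C_K$ are petite for this $\psi$-irreducible T-chain. This is taken from the cited theory in \cite{meyn2009markov} and stated earlier, so I would simply invoke it.
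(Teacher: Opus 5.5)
Your proposal is correct, and its outer structure matches the paper's: both reduce the theorem to Lemma~\ref{lemma:pHr} by proving $\E[\tau_n\mid\Lambda_{n-1}=x]\to 0$ as $x\to\infty$.

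The difference lies in how that limit is obtained. The paper builds the piecewise linear majorant $U(y,x)$ of Eq.~\eqref{Eq:linear_bound}. Below the $x$-dependent cut-off $y_0(x)$ of Eq.~\eqref{Eq:y_0} it uses convexity of $\Phi^{-1}(\cdot,x)$, and above it the asymptotic linear approximation. It then shows that the two truncated pieces vanish because $y_0(x)\to\infty$. You replace all of this with the single $x$-uniform bound $\Phi^{-1}(y,x)\le y/\mu$, which is immediate from $\Phi(t,x)\ge\mu t$ on $[\mu,\infty)$. Together with the pointwise limit, dominated convergence finishes the job.

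Your route is shorter and needs neither convexity nor the $y_0$ bookkeeping. It also makes the drift constant genuinely uniform by targeting $(\beta\mu_\varepsilon-\alpha)/(2\beta\mu)$. This matters because Lemma~\ref{lemma:pHr} as stated assumes only a strict pointwise inequality for each $x>K$. That alone does not yield a single $\delta>0$ valid for all $x>K$, so your version is a real tightening.

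The paper's construction, in exchange, is quantitative. Its bounds are explicit in $x$ and in the tail quantity $\E[\varepsilon\mathbbm{1}_{\varepsilon>y_0}]$, and the $U_1$ contribution is $O(\log x/x)$. It can therefore produce a concrete $K$ or a decay rate for the conditional mean duration, whereas dominated convergence delivers only the limit.

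To make your pointwise step self-contained, you need not lean on the appendix's implicit-differentiation remark. For fixed $t>0$, $\Phi(t,x)\to\infty$ as $x\to\infty$, so $\Phi^{-1}(y,x)<t$ for all sufficiently large $x$.
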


\begin{proof}
	Using the piecewise linear upper bound of $\Phi^{-1}$ in Eq.~\eqref{Eq:linear_bound}, we have
	\begin{align*}
		\mathbb{E}[\tau_n \mid \Lambda_{n-1} = x] \leq \mathbb{E}[U_1(\varepsilon, x)  \mathbbm{1}_{\varepsilon \leq y_0} ] + \mathbb{E}[U_2(\varepsilon, x)  \mathbbm{1}_{\varepsilon > y_0} ]
	\end{align*}
	where $y_0$ is defined as in Eq~\eqref{Eq:y_0}.
	Both contributions from $U_1$ and $U_2$ vanish as $x \to \infty$. 
	
	For the first term,
	$$
	\mathbb{E}[U_1(\varepsilon, x)  \mathbbm{1}_{\varepsilon \leq y_0} ] = \frac{\Phi^{-1}(y_0, x)}{y_0} \E \left[   \varepsilon  \mathbbm{1}_{\varepsilon \leq y_0}  \right] \leq \frac{\Phi^{-1}(y_0, x)}{y_0}  \E[\varepsilon] = \frac{1}{y_0} \left(\frac{y_0}{\mu} - \frac{x - \mu + \alpha}{\mu\beta} + \delta_0\right) \E[\varepsilon].
	$$
	By the definition of $y_0$,
	$$
	\mathbb{E}[U_1(\varepsilon, x)  \mathbbm{1}_{\varepsilon \leq y_0} ] \leq \frac{1}{y_0} \left( \log \left( \frac{x - \mu + \alpha}{\mu \beta \delta_0}  \right) + \delta_0 \right) \to 0
	$$
	as $x \to \infty$ since $y_0 = \mathcal O (x + \log x).$ 
	
	For the second term, by Eq.~\eqref{Eq:U2},
	$$
	\mathbb{E}[U_2(\varepsilon, x)  \mathbbm{1}_{\varepsilon > y_0} ] = \frac{1}{\mu}\mathbb{E}\left[\left(\varepsilon  - \frac{x - \mu + \alpha}{\beta} \right) \mathbbm{1}_{\varepsilon > y_0}  \right] +  \delta_0 \mathbb P (\varepsilon > y_0).$$
	As $x \to \infty$, the threshold $y_0 = y_0(x)$ diverges to infinity. 
	By the dominated convergence theorem, since $\varepsilon \mathbbm{1}_{\varepsilon > y_0} \le \varepsilon$ and $\mathbb{E}[\varepsilon] < \infty$, we have
	\[
	\mathbb{E}[\varepsilon \mathbbm{1}_{\varepsilon > y_0}] \to 0 \quad \text{as } y_0 \to \infty,
	\]
	and hence
	$$ \mathbb{E}\left[\left(\varepsilon  - \frac{x - \mu + \alpha}{\beta} \right) \mathbbm{1}_{\varepsilon > y_0}  \right] \leq \mathbb{E}[\varepsilon \mathbbm{1}_{\varepsilon > y_0}] \to 0 \quad \text{as } x \to \infty$$
	and
	$$ \mathbb P (\varepsilon > y_0) \to 0 \quad \text{as } x \to \infty$$
	and
	$$ \mathbb{E}[U_2(\varepsilon, x)  \mathbbm{1}_{\varepsilon > y_0} ] \to 0 \quad \text{as } x \to \infty.$$
	Therefore, the entire expectation \( \mathbb{E}[\tau_n \mid \Lambda_{n-1} = x] \to 0 \) as \( x \to \infty \).
	
	Thus, there exists \( K \) such that for all \( x > K \),
	Eq.~\eqref{Eq:tau_condition} holds.
	By Lemma~\ref{lemma:pHr}, the Markov chain \( \{\Lambda_n\} \) is positive Harris recurrent.
\end{proof}

Under the condition $\alpha < \beta \mu_\varepsilon$, 
the process $\{\Lambda_n\}$ is a Lebesgue-irreducible, aperiodic, and positive Harris recurrent Markov chain. 
Consequently, it admits a unique stationary distribution $\pi$.
For any initial distribution $\nu$, the law $\nu P^n$ converges to $\pi$ in total variation as $n \to \infty$:
\begin{equation*}
	\|\nu P^n - \pi\| = \sup_{A \in \mathcal B(\mathcal X)} \big| \nu P^n(A) - \pi(A) \big| \;\to\; 0, \quad n \to \infty,
\end{equation*}
where
\[
\nu P^n(A) =  \int_{\mathcal X} P^n(x,A)\,\nu(\mathrm{d} x).
\]

In particular, this implies that the process $\{\Lambda_n\}$ is ergodic: 
regardless of the initial distribution $\nu$, its law converges to the stationary distribution $\pi$ in total variation.
If $\Lambda_0$ is initialized according to the stationary distribution, then $\{\Lambda_n\}$ is strictly stationary.
These results are well established in the theory of Markov chains; see, e.g., Meyn and Tweedie~\cite{meyn2009markov} for formal statements and proofs.

The sequence \( \{\tau_n\} \) is also strictly stationary, since \( \tau_n = \Phi^{-1}(\varepsilon_n; \Lambda_{n-1}) \) depends on two stationary sequences.
Under the stability condition \( \alpha < \beta \mu_\varepsilon \),
\( \mathbb{E}[\Lambda_n] \) is finite.
From Eq.~\eqref{Eq:lambda_linear},
\[
\Lambda_n = \Lambda_{n-1} + \alpha - \beta \varepsilon_n + \beta \mu \tau_n.
\]
Taking expectations under stationarity (\(\mathbb{E}[\Lambda_n] = \mathbb{E}[\Lambda_{n-1}] = \mathbb{E}[\Lambda]\)) gives
\[
0 = \alpha - \beta \mu_\varepsilon + \beta \mu \, \mathbb{E}[\tau_n],
\]
which leads to
\[
\mathbb{E}[\tau_n] = \frac{\beta \mu_\varepsilon - \alpha}{\beta \mu}.
\]
\end{document}